\documentclass[11pt,letterpaper]{article}

\usepackage[letterpaper,margin=1.0in]{geometry}
\usepackage{url}
\usepackage{amssymb}
\setcounter{tocdepth}{3}
\usepackage{graphicx}
\usepackage{amsmath}
\usepackage{amsfonts}
\usepackage{amssymb}
\usepackage{amsthm}
\usepackage{algorithm}
\usepackage{algorithmic}

\usepackage[pdftex,
            colorlinks,
            plainpages=false,
            hyperindex,
            bookmarksopen,
            linkcolor=black,
            citecolor=black,
            urlcolor=blue]{hyperref}
\usepackage[labelfont=bf]{subcaption}
\usepackage{multirow}
\usepackage{xcolor}
\usepackage{float}

\usepackage{authblk}

\usepackage[normalem]{ulem}

\newtheorem{thm}{Theorem}[section]

\newtheorem{defin}{Definition}[section]

\title{Noise-Adaptive Quantum Compilation Strategies Evaluated with Application-Motivated Benchmarks}

\author[1,2]{Davide Ferrari}
\author[1,2]{Michele Amoretti}
\affil[1]{Department of Engineering and Architecture - University of Parma, Italy}
\affil[2]{Quantum Information Science @ University of Parma, Italy}

\date{}

\begin{document}

\maketitle

\begin{abstract}

Quantum compilation is the problem of translating an input quantum circuit into the most efficient equivalent of itself, taking into account the characteristics of the device that will execute the computation. Compilation strategies are composed of sequential passes that perform placement, routing and optimization tasks. Noise-adaptive compilers do take the noise statistics of the device into account, for some or all passes. The noise statics can be obtained from calibration data, and updated after each device calibration.

In this paper, we propose a novel noise-adaptive compilation strategy that is computationally efficient. The proposed strategy assumes that the quantum device coupling map uses a heavy-hexagon lattice. Moreover, we present the application-motivated benchmarking of the proposed noise-adaptive compilation strategy, compared with some of the most advanced state-of-art approaches. The presented results seem to indicate that our compilation strategy is particularly effective for deep circuits and for square circuits.

\textbf{keywords} - \textit{Noise-adaptive quantum compilation; Quantum programming tools; Quantum benchmarking}
\end{abstract}

\section{Introduction}

Current quantum computers are noisy, characterized by a reduced number of qubits (5-50) with non-uniform quality and highly constrained connectivity. These systems and near-term ones (with 1000 qubits or less), which are denoted as Noisy Intermediate-Scale Quantum (NISQ), may be able to surpass the capabilities of today's most powerful classical digital computers, for some problems. However, noise in quantum gates limits the size of quantum circuits that can be executed reliably. To reliably process information, the physical qubits should maintain the quantum state for sufficiently long. The qubits should also support sufficiently precise operations to allow for correct state manipulation during the coherence window. Last but not least, the qubits should support accurate measurement operations. 

Quantum compilation is the problem of translating an input quantum circuit into the most efficient equivalent of itself~\cite{Corcoles2020}, taking into account the characteristics of the device that will execute the computation. In general, the quantum compilation problem is NP-Hard~\cite{Botea2018,Soeken2019}.

Compilation strategies are composed of sequential \textit{passes} that perform placement, routing and optimization tasks. Placement is performed once, at the very beginning of the compiling process. It is the task of defining a mapping between the virtual qubits of the input quantum circuit and the physical qubits of the device. Routing is the task of modifying the circuit in order to move through subsequent mappings, by means of a clever swapping strategy, in order to conform to the qubit layout of the device. Optimization is the task of minimizing some property of the circuit in order to reduce the impact of noise. Several routing and optimization passes may be executed within a compilation strategy. 

Noise-adaptive compilers do take the noise statistics of the device into account~\cite{Murali2019,Niu2020, Nishio2020, Sivarajah2020}, for some or all passes. The noise statics can be obtained from calibration data, and updated after each device calibration. For example, Qiskit\footnote{\url{https://qiskit.org/}} allows for programmatic retrieval of IBM Q devices' calibration data.

In the literature, compiled circuits are frequently evaluated in terms of depth and gate count overhead with respect to the input circuits. Calculating these figures of merit does not require to execute the compiled circuit. Another common method is to run the compiled circuit many times with a figure of merit being the success rate, i.e., the fraction of runs that resulted in a correct (classical) answer. The Hellinger fidelity \cite{Pollard2001} is a measure of the distance between two random distribution. It is frequently  used to compare the sampled distribution of the results of a quantum computation to the theoretical distribution (provided that the latter one is known).

Recently, Mills et al.~\cite{Mills2020} have presented a framework for  application-motivated benchmarking of full quantum computing stacks. The benchmarks defined there have a \textit{circuit class}, describing the type of circuit to be run on the system, and a \textit{figure of merit}, quantifying how well the system did when running circuits from that class. The idea is that a circuit class should represent a particular application domain. The application-motivated circuit classes proposed by Mills et al. draw inspiration from quantum algorithmic primitives \cite{Blume2019} and from the literature on near-term quantum computing applications (e.g., machine learning and chemistry).

\subsection{Our Contributions}

The first contribution of this paper is a novel noise-adaptive compilation strategy that is computationally efficient. The proposed strategy assumes that the coupling map, i.e., the device-specific directed graph whose vertices correspond to the physical qubits and edges correspond to permitted CNOT gates, uses a heavy-hexagon lattice. In Section~\ref{sec:strategy}, we describe the proposed strategy and we thoroughly analyze its theoretical performance.  

The second contribution is the application-motivated benchmarking of the proposed noise-adaptive compilation strategy, compared with some of the most advanced state-of-art approaches. In Section~\ref{sec:benchmarks}, we summarize the features of three notable circuit classes (deep, square and shallow) and we recall the definitions of five significant figures of merit. Then, in Section~\ref{sec:eval}, we show and discuss the evaluation results.

\section{Related Work}
Despite the quantum compilation problem has been studied for years, there are still few noise-adaptive compiling strategies. 

Qiskit's \textsf{NoiseAdaptiveLayout} placement pass associates a physical qubit to each virtual qubit of the circuit using calibration data, based on the heuristic method proposed by Murali et al.~\cite{Murali2019}. The pass maps virtual qubit pairs in order of decreasing frequency of the CNOT occurrences between them. If a pair exists with both qubits unmapped, the pass picks the best available physical qubit pair, based on CNOT reliability, to map it. If a pair has only one qubit unmapped, the pass maps that qubit to a location that ensures maximum reliability for CNOTs with previously mapped qubits. In the end if there are unmapped qubits, the pass maps them to any available physical qubit.

Nishio et al.~\cite{Nishio2020} proposed a placement pass and a routing pass. The placement pass, which is denoted as \textsf{Greatest Connecting Edge Mapping}, leverages a strategy that is very similar to the one proposed by Murali et al.~\cite{Murali2019}. The routing pass is a beam search algorithm with an heuristic cost function based on the estimated success probability of candidate SWAP gates.

t$|$ket$\rangle$'s \textsf{NoiseAwarePlacement}~\cite{Sivarajah2020} searches for candidate partial placements of virtual qubits to the physical ones. This is done by casting the problem as finding a subgraph monomorphism between the coupling map and a graph representing virtual qubit CNOT interactions in the circuit. Where different possible candidates placement are found, using a heuristic approach, the pass chooses the one with the maximum expected overall fidelity.

Recently, Niu et al.~\cite{Niu2020} have implemented a hardware-aware routing pass (\textsf{HA}), inspired by the work of Li et al.~\cite{Li2019}, that iteratively selects the best scoring SWAP with respect to calibration data as well as qubits distance. The influence of each of these factors on the cost function can be set by means of tunable weights.

\section{Proposed Compilation Strategy}
\label{sec:strategy}

We introduce an heuristic-based compilation strategy consisting of two compilation passes that account for calibration data such as gate reliability and readout errors. The first one is a placement pass, i.e., a pass that finds an initial mapping of virtual qubits to physical ones. The second pass is a routing strategy to make all two-qubits gates compliant with hardware connectivity. Both passes exploit device calibration data with the aim of improving the quality of the states produced by the compiled circuit. In the following we assume a heavy-hexagon lattice for the coupling map of the device, such as the one used by IBM superconducting devices \cite{IBMQ}. In heavy-hexagon lattices, the qubits are located on the nodes and edges of each hexagon. Each qubit has either two or three neighbors, meaning the graph has vertices of degree 2 or 3. As a consequence, only three different frequency assignments are necessary for the superconducting qubits, as opposed to a square lattice, which naturally requires at least five different frequencies for addressability. The heavy-hexagon lattice also greatly reduces crosstalk errors since, in principle, only qubits on the edges of the lattice need to be driven by cross-resonance (CR) drive tones \cite{Zhu2020}.

\subsection{Placement}

In the proposed placement pass, a sequence of qubits connected in a line is detected. On a coupling map with heavy-hexagon connectivity, it not possible to find a line connecting all qubits. However, it is possible to find lines that traverse a good portion of the graph, with leftover qubits that could later be used for swapping across segments of the line. Fig.~\ref{fig:ibmq_manhattan} shows the coupling map of the 65 qubits \textit{ibmq\_manhattan} device, with the sequence of qubits used for the initial mapping highlighted in light-blue. 

\begin{figure}
    \centering
    \includegraphics[width=8cm]{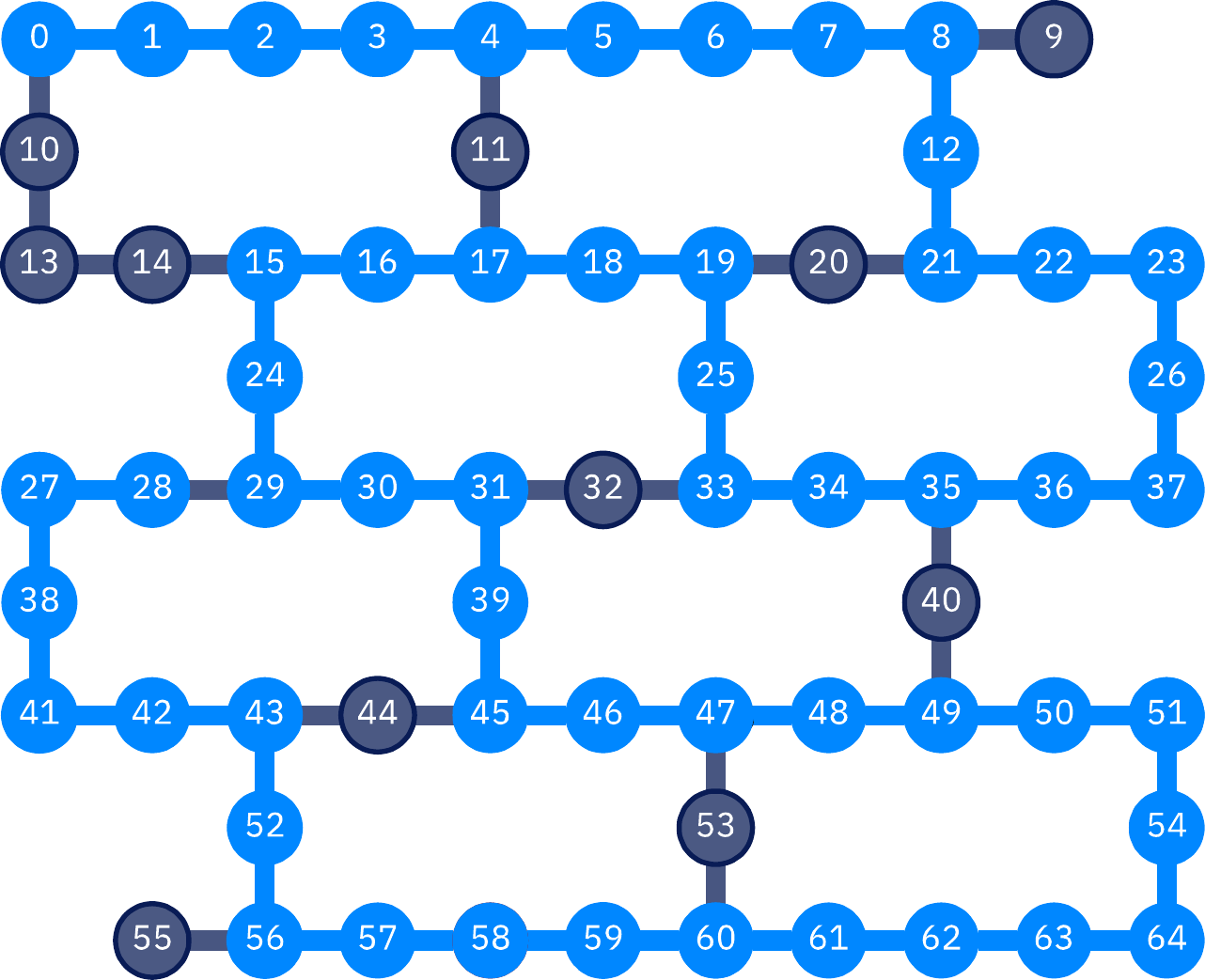}
    \caption{Initial mapping on the 65 qubits of the \textit{ibmq\_manhattan} device, highlighted in light-blue.}
    \label{fig:ibmq_manhattan}
\end{figure}

To find a line of qubits, one can leverage the features of the graph, such as its regular structure and the fact that every qubit is identified by a number in $\{0,...,n-1\}$, where $n$ is the number of qubits in the device. This is done starting from node $0$ and traversing the graph while keeping track of already visited nodes, backtracking if a dead end is reached before having explored the entire graph.

If the found sequence contains more qubits than needed by the circuit, the pass selects a best subset based on two-qubit gates reliability using a sliding window technique. On the other hand, if there are not enough qubits in the sequence, the pass proceeds to insert qubits from the leftover ones, depicted in dark gray in Fig.~\ref{fig:ibmq_manhattan}, until the necessary number of qubits is reached. These qubits are first scored based on calibration data, and then inserted into the chain after one of their neighboring qubits, starting form the one with the highest score.

\subsection{Noise Adaptive Swap}

Given an initial mapping of virtual qubits to physical ones, the \textsf{NoiseAdaptiveSwap} pass proceeds to compute a \textit{front layer} of non hardware-compliant CNOT gates. 

\begin{defin}
Let $C_{k}$ be a quantum circuit on $k$ qubits and $U_{i}(\mathcal{Q}_{i})$ a quantum gate acting on a set of qubits $\mathcal{Q}_{i}$ with $0<|\mathcal{Q}_{i}|\leq k$. A \textsf{layer} $\mathcal{L}$ is a set of consecutive gates that can be applied concurrently such that:
\begin{enumerate}
    \item $\mathcal{Q}_i \cap \mathcal{Q}_j = \emptyset$ for all $U_{i},U_{j} \in \mathcal{L}$, with $i \neq j$
    \item $\sum_{U_{i}\in \mathcal{L}} |\mathcal{Q}_{i}| \leq k$
\end{enumerate}
\end{defin}

\begin{defin}
The \textsf{front layer} is a layer $\mathcal{F}$ such that $|\mathcal{Q}_{i}| = 2$ for all $ U_{i} \in \mathcal{F}$.
\end{defin}

For convenience, we reformulate circuits by means of the Directed Acyclic Graph (DAG) circuit formalism, as it enables to effectively represent gates dependencies in quantum circuits.

\begin{defin}
A \textsf{DAG circuit} is a directed acyclic graph where vertices represent gates and directed edges represent qubit dependencies. A directed edge $e_{q}(i,j)$ between vertices $i$ and $j$ represents a dependency between gate $i$ and gate $j$ with respect to qubit $q$, i.e., gate $i$ must be executed before $j$ and both gates act on qubit $q$.
\end{defin}

\begin{defin}
Given a directed graph $G = (\mathcal{V}, \mathcal{E})$, a \textsf{topological order} of $G$ is a linear ordering over vertices in $\mathcal{V}$ such that, for all directed edges $(v, w) \in \mathcal{E}$, $v$ precedes $w$ in the ordering.
\end{defin}

\begin{defin}
Let $v_{0}v_{1}...v_{n}$ be a topological ordering on graph $G=(\mathcal{V},\mathcal{E})$. Then $v_{j}$ is a \textsf{direct topological successor} of $v_{i}$ iff $j>i$ and $\exists e \in \mathcal{E}$ such that $e=(v_{i},v_{j})$.
\end{defin}

To compute the front layer of non hardware-compliant CNOT gates, the pass iterates over all gates in the circuit, in topological order. Gates that do not need routing, such as one-qubit gates or hardware-compliant CNOT gates are added to the set of \textit{executed} gates $\mathcal{X}$. CNOT gates that need to be properly mapped and do not interact with qubits already interested by a CNOT gate in the front layer, are added to the latter. Every gate that is a successor of a CNOT gate in the front layer, is added to the set of \textit{not executed} gates $\mathcal{\bar{X}}$.

From the front layer, the pass computes a list of possible SWAP operations involving at least one qubit interested by a CNOT in the front layer. These SWAP operations are scored with an heuristic cost function $h(s)$, where $s$ denotes the considered SWAP gate. The cost function is computed over all gates in the \textit{front layer} $\mathcal{F}$ plus a set of upcoming gates $\mathcal{U} \subset \mathcal{\bar{X}}$, as shown in Eq.~\ref{eq:h}, where $\pi_{s}(g_{c})$ and $\pi_{s}(g_{t})$ are the physical qubits corresponding to the control and target of gate $g$ with mapping $\pi_{s}$.

\begin{equation}
    h(s) =  \frac{\alpha}{|\mathcal{F}\cup \mathcal{U}|} \sum_{g \in \mathcal{F}\cup \mathcal{U}} R(\pi_{s}(g_{c}),\pi_{s}(g_{t})) + \frac{1-\alpha}{|\mathcal{F}\cup \mathcal{U}|} \sum_{g \in \mathcal{F}\cup \mathcal{U}} 1-D(\pi_{s}(g_{c}),\pi_{s}(g_{t}))\label{eq:h}
\end{equation}

Here, $R$ and $D$ are respectively the swap paths reliability matrix and the distance matrix for every qubits pair in the coupling map. Matrix $R$ stores in entry $(i,j)$ the reliability of the most reliable swap path between qubits $i$ and $j$, where the reliability of a single SWAP along an edge of the coupling map is computed with regard to CNOT gate and readout error rates. Matrix $D$ is the distance matrix and stores at entry $(i,j)$ the shortest distance between qubits $i$ and $j$. The swap path reliability and the distance between qubits are important components that one would like to maximize and minimize, respectively. The coefficient $\alpha$ gives the opportunity to set equal or opposite weights for them, in the context of the heuristic cost function $h(s)$.

\begin{defin}
The \textsf{reliability} of SWAP $s$ between qubits $i$ and $j$ is $r(s) = \mu(i,j)^3$, where $\mu(i,j)$ is the success rate of a $CNOT(i,j)$ between qubits $i$ and $j$, obtained from calibration data.
\end{defin}

In the previous definition, it is assumed that the SWAP gate is composed of 3 CNOT gates.

\begin{defin}
Let $S = s_{0}s_{1}...s_{k}$ be a sequence of SWAP gates (SWAP path) and denote the reliability of SWAP gate $s$ as $r(s)$. Then the reliability of $S$ is given by
$\prod_{i=0}^{k} r(s_{i})$.
\end{defin}

\begin{defin}
Given a quantum device with $n$ qubits, the SWAP paths \textsf{reliability matrix} $R \in \mathbb{R}^{n \times n}$, is the real matrix that stores in entry $(i,j)$ the maximum reliability or, equivalently, success rate at which qubit $i$ can be moved to a neighbor of qubit $j$ through a sequence of SWAP gates.
\end{defin}

\begin{defin}
Given a quantum device with $n$ qubits, the \textsf{distance matrix} $D \in \mathbb{N}^{n \times n}$, is the matrix that stores in entry $(i,j)$ the minimum distance between qubit $i$ and a neighbor of qubit $j$.
\end{defin}

Both matrices can be efficiently precomputed using the Floyd–Warshall algorithm~\cite{Floyd1962}. Matrix $R$ is computed with edge weights corresponding to SWAP reliability, while matrix $D$ has edge weights equal to 1. The SWAP reliability can be easily derived from CNOT calibration data, as each SWAP is usually achieved with three consecutive CNOTs. As matrix $D$ contains entries with incompatible scales with respect to $R$, both matrices in Eq.~\ref{eq:h} are normalized.

The first part of Eq.~\ref{eq:h} sums over the swap reliability of all gates involved in $\mathcal{F}\cup \mathcal{U}$ and divides by $|\mathcal{F}\cup \mathcal{U}|$ to obtain a mean value. The second part follows a similar approach with the distance matrix, except that the sum must be over $1-D$, as the goal is to maximize the reliability while minimizing the normalized distance. 

\subsection{Computational Complexity}

\begin{thm}
The execution of the \textsf{Placement} pass takes $O(n)$ time, where $n$ is the number of qubits in the device.
\end{thm}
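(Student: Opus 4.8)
The plan is to decompose the \textsf{Placement} pass into its three phases and bound each of them by $O(n)$ separately, so that their sum is still $O(n)$: (i) extracting a line of physical qubits by a graph traversal from node $0$; (ii) trimming an over-long line with a sliding window; and (iii) padding an over-short line with leftover qubits. For phase (i) I would first invoke the standing hypothesis that the coupling map is a heavy-hexagon lattice, so every vertex has degree $2$ or $3$ and hence $|\mathcal{E}| \le \tfrac{3}{2} n = O(n)$; the graph is sparse. The line is built by a depth-first traversal that marks visited vertices and backtracks at dead ends, so, exactly as in the textbook DFS analysis, each edge is followed a bounded number of times (forward, and on backtrack) and each vertex is scanned $O(\deg)$ times; summing over vertices gives $O(|\mathcal{V}| + |\mathcal{E}|) = O(n)$.

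For phase (ii), the window has fixed width $k \le n$, where $k$ is the number of virtual qubits, and it is shifted across the $\le n$ positions of the line. The point is that the window score — a sum (resp.\ product) of the two-qubit reliabilities of the $k-1$ edges currently inside it — is maintained incrementally: on each unit shift exactly one edge leaves the window and one enters, so the update is $O(1)$, and the whole sweep costs $O(n)$ once the first window has been initialised in $O(k) = O(n)$ time. For phase (iii), there are at most $n$ leftover qubits; each is assigned a calibration-based score in $O(1)$, a best leftover qubit is inserted next to an already-placed neighbour, and at most $n$ such insertions are performed, so — selecting by a linear scan (or bucketed selection on the bounded-range scores) rather than a comparison sort — the total work is $O(n)$. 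Adding the three $O(n)$ contributions, together with the $O(n)$ cost of reading the coupling map and the per-edge calibration data, yields the claimed $O(n)$ bound.

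The step I expect to be the main obstacle is making phase (i) genuinely linear rather than secretly exponential: a ``backtracking'' traversal that searches for a long path is, in general graphs, tantamount to searching for a Hamiltonian path. The argument I would make explicit is the invariant that a vertex, once visited, is never unmarked and never re-entered, so ``backtracking'' only unwinds the recursion stack without re-exploring any subtree — i.e.\ the traversal is a single DFS tree walk, whose total cost is $O(|\mathcal{V}| + |\mathcal{E}|)$. This is precisely why the algorithm is designed to settle for a long line plus leftover qubits rather than an optimal (Hamiltonian) line: giving up optimality of the line is what keeps the placement in linear time, and I would state this trade-off where the proof relies on it.
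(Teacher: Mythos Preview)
Your proof is correct and considerably more thorough than the paper's own argument, which consists of a single sentence: the pass ``needs to iterate over all nodes in the coupling map, taking $O(n)$ time.'' The paper does not separately analyse the line-finding traversal, the sliding-window subphase, or the leftover-insertion subphase, nor does it address the backtracking-versus-Hamiltonian-path concern you flag; your three-phase decomposition, the use of $|\mathcal{E}|=O(n)$ from the bounded-degree hypothesis, and the explicit DFS invariant (visited vertices are never unmarked) supply the justification the paper leaves implicit. The one place your argument is slightly delicate is phase~(iii): the paper's description inserts leftover qubits ``starting from the one with the highest score,'' which read literally requires an ordering and hence $O(n\log n)$ via comparison sort; your parenthetical about linear scan or bucketed selection is the right escape hatch, though it leans on an assumption about the score range that neither you nor the paper states --- and the paper's one-line proof simply does not engage with this issue at all.
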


\begin{proof}
In order to find the initial placement of qubits, the \textsf{Placement} pass needs to iterate over all nodes in the coupling map, taking $O(n)$ time, where $n$ is the number of qubits in the device.
\end{proof}

As previously stated, the $R$ and $D$ matrices are precomputed using the Floyd–Warshall algorithm, whose time complexity is $O(n^{3})$.

\begin{thm}
Assuming a coupling map with heavy-hexagon connectivity, computing the $h(s)$ score of a candidate SWAP gate $s$ takes $O(n^{2})$ time, where $n$ is the number of qubits in the device.
\label{thm:score}
\end{thm}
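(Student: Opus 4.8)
The plan is to expand $h(s)$ from Eq.~\ref{eq:h} into its constituent operations and bound each. Evaluating $h(s)$ amounts to: (i) deriving the post-swap mapping $\pi_s$ from the current mapping $\pi$ and the candidate SWAP $s$; (ii) for every gate $g \in \mathcal{F}\cup\mathcal{U}$, reading the two table entries $R(\pi_s(g_c),\pi_s(g_t))$ and $D(\pi_s(g_c),\pi_s(g_t))$ and adding them to two running sums; and (iii) forming the two weighted arithmetic means. Since $R$ and $D$ are precomputed once by Floyd--Warshall and their per-matrix normalization constants can be cached, every access in (ii) costs $O(1)$, and step (iii) is then $O(1)$; step (i) costs $O(n)$, since it suffices to copy the length-$n$ virtual-to-physical mapping array and exchange the (at most two) entries touched by $s$. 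Hence one $h(s)$ evaluation runs in time $O(n) + O(|\mathcal{F}\cup\mathcal{U}|)$.

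The crux is therefore to bound $|\mathcal{F}\cup\mathcal{U}|$. For the front layer I would invoke the layer definition directly: a layer on the $k \le n$ circuit qubits satisfies $\sum_{U_i \in \mathcal{L}}|\mathcal{Q}_i| \le k$, so a layer consisting solely of two-qubit gates contains at most $\lfloor n/2 \rfloor$ gates, giving $|\mathcal{F}| = O(n)$. The set $\mathcal{U}\subset\bar{\mathcal{X}}$ of upcoming gates is, by construction, a union of successive layers of the DAG circuit, so each of its constituent layers likewise contributes $O(n)$ two-qubit gates; bounding the lookahead horizon (conservatively) by $O(n)$ layers gives $|\mathcal{U}| = O(n^2)$, and hence $|\mathcal{F}\cup\mathcal{U}| = O(n^2)$. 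Combining with the first paragraph, one $h(s)$ evaluation runs in $O(n) + O(n^2) = O(n^2)$ time, which is the claim.

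The heavy-hexagon assumption is what keeps the side costs under control: it forces every vertex of the coupling map to have degree at most $3$, so the graph has $O(n)$ edges, the precomputed matrices $R$ and $D$ remain of size $O(n^2)$, and the total number of candidate SWAPs that the routing step must consider stays $O(n)$ --- none of which affects the cost of a single $h(s)$ but all of which is needed for the surrounding pass to remain efficient.

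The step I expect to be the main obstacle is the bound on $|\mathcal{F}\cup\mathcal{U}|$, i.e., making precise how far the lookahead set $\mathcal{U}$ reaches into $\bar{\mathcal{X}}$: the per-gate work is $O(1)$ by the precomputation, so the $O(n^2)$ figure really rests on this size estimate. Should $\mathcal{U}$ instead be capped at a fixed-size window, the same reasoning yields the sharper $O(n)$ and $O(n^2)$ remains a valid upper bound; either way the stated complexity follows.
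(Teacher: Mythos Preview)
Your argument reaches the stated $O(n^{2})$ bound by a different decomposition than the paper's, and in doing so you have read the theorem more literally than the authors intended. You treat $O(n^{2})$ as the cost of evaluating $h(s)$ for a \emph{single} candidate $s$, and you manufacture the second factor of $n$ by positing that the lookahead set $\mathcal{U}$ spans $O(n)$ layers. That layer bound is not supported anywhere in the paper; in fact the paper fixes $\mathcal{U}$ to a constant-size window (five by default), so $|\mathcal{F}\cup\mathcal{U}| = O(n)$ and a single $h(s)$ evaluation costs only $O(n)$.

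The paper's $O(n^{2})$ instead comes from scoring \emph{all} candidate SWAPs for the current front layer: each evaluation costs $O(n)$ by the argument above, and the heavy-hexagon assumption---which you correctly note bounds the vertex degree and hence the number of coupling-map edges by $O(n)$---limits the pool of candidate SWAPs to $O(n)$. Multiplying gives the $O(n^{2})$ per-layer figure that the theorem records and that the next theorem then consumes inside the beam-search cost. So the heavy-hexagon hypothesis is not a side remark about ``surrounding'' efficiency as you suggest; it supplies one of the two factors of $n$ in the bound itself. Your hedge in the final paragraph (a fixed-size $\mathcal{U}$ gives $O(n)$ per SWAP, and $O(n^{2})$ remains a valid upper bound) is true but vacuous, and it discards precisely the structure the paper's proof is exhibiting.
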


\begin{proof}
Computing the score of a candidate SWAP gate requires retrieving SWAP paths reliability and distance information for all non-compliant CNOT gates in the front layer, and a few additional ones from subsequent layers (by default 5, which can be treated as a constant factor and therefore ignored). There can be at most $n/2$ CNOT gates in one layer, where $n$ is the number of qubits. Thus, computing the $h(s)$ score takes $O(n)$ time.
Assuming a coupling map with heavy-hexagon connectivity, the number of possible different SWAP gates candidates in a layer is $O(n)$. Consequently, to obtain the most promising SWAP gates for the current layer the score function must be computed $O(n)$ times, taking a total of $O(n^{2})$ time.
\end{proof}

A \textit{beam search} algorithm is used to look up for a new possible mapping through a sequence of SWAP gates, with respect to either the initial mapping or the one from a previous iteration of the routing pass. The beam search algorithm is characterized by beam width $w$ and search depth $k$, which gives an $O(w^{k} n^{2})$ time complexity, taking into account the statement from Theorem~\ref{thm:score}.

\begin{thm}
Assuming a coupling map with heavy-hexagon connectivity, the \textsf{NoiseAdaptiveSwap} pass takes $O(\frac{w^{k}}{k} g n^{2.5})$ time, where $g$ is the number of CNOT gates in the circuit.
\end{thm}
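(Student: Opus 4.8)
The plan is to combine the per-invocation cost of the beam search, already argued in the text, with an upper bound on how many times the \textsf{NoiseAdaptiveSwap} pass invokes that beam search over the whole circuit. From the discussion preceding the statement, one invocation of the beam search — with beam width $w$ and search depth $k$ — costs $O(w^{k}n^{2})$ time, because it explores $O(w^{k})$ partial-mapping nodes and, by Theorem~\ref{thm:score}, scoring all the candidate SWAP gates needed to expand a node costs $O(n^{2})$. It therefore suffices to show that the pass performs $O\!\left(\tfrac{g\sqrt{n}}{k}\right)$ beam-search invocations and that the remaining bookkeeping does not dominate; multiplying the two quantities then yields $O\!\left(\tfrac{w^{k}}{k}\,g\,n^{2.5}\right)$.

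First I would bound the total number of SWAP gates inserted during the whole run. A heavy-hexagon lattice on $n$ qubits is a planar, bounded-degree, near-grid graph, so its diameter is $O(\sqrt{n})$; hence any two physical qubits lie within distance $O(\sqrt{n})$, and turning a single non-compliant CNOT of the front layer into a hardware-compliant one requires routing its control and target to adjacency with at most $O(\sqrt{n})$ SWAP gates. Since the circuit contains $g$ CNOT gates, and each one triggers at most $O(\sqrt{n})$ SWAP insertions before it becomes executable, the total number of SWAP gates the pass inserts is $O(g\sqrt{n})$.

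Next I would relate SWAP insertions to beam-search invocations. Each invocation commits the best path found, i.e. it appends up to $k$ SWAP gates to the circuit before the front layer and the sets $\mathcal{X},\bar{\mathcal{X}}$ are recomputed. Dividing the SWAP budget $O(g\sqrt{n})$ by the $k$ SWAPs fixed per invocation gives $O\!\left(\tfrac{g\sqrt{n}}{k}\right)$ invocations, and multiplying by the $O(w^{k}n^{2})$ cost of each invocation gives the claimed $O\!\left(\tfrac{w^{k}}{k}\,g\,n^{2.5}\right)$. Finally I would observe that traversing the DAG circuit in topological order to build and incrementally maintain the front layer (visiting each gate and each qubit-dependency edge a constant number of times) contributes only $O(g)$ in total, which is absorbed by the dominant term.

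The main obstacle is the justification of the $\sqrt{n}$ factor, i.e. the claim that there are only $O(g\sqrt{n}/k)$ invocations. This rests on two points that should be made explicit: (i) the diameter of the heavy-hexagon coupling graph is indeed $O(\sqrt{n})$, which follows from its planar, bounded-degree structure but is the place where the heavy-hexagon hypothesis is genuinely used; and (ii) the routing is \emph{non-wasteful}, in the sense that the pass does not keep inserting SWAPs once a CNOT is routable, so the per-CNOT SWAP count stays $O(\text{diameter})$ and the total stays $O(g\sqrt{n})$. A secondary subtlety is whether the front layer is maintained incrementally (giving the $O(g)$ above) or rebuilt from scratch after each of the $O(g\sqrt{n}/k)$ invocations; in the latter case one should check that the resulting $O\!\left(g^{2}\sqrt{n}/k\right)$ term is still dominated in the regime of interest, or simply argue that the incremental implementation is what is used.
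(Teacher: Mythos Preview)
Your proposal is correct and follows essentially the same line as the paper's proof: per-invocation beam-search cost $O(w^{k}n^{2})$, diameter $O(\sqrt{n})$ on the heavy-hexagon lattice giving $O(\sqrt{n})$ SWAPs per CNOT, division by $k$ for the SWAPs committed per invocation, and multiplication by $g$. Your version is in fact more careful than the paper's, which simply asserts that the qubit-to-qubit distance ``can be reasonably approximated to $O(\sqrt{n})$'' and that each search inserts $k$ SWAPs ``on average,'' without discussing the non-wasteful routing assumption or the front-layer bookkeeping cost that you flag.
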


\begin{proof}
Beam search with beam width $w$ and search depth $k$ takes $O(w^{k} n^{2})$ time.
Given a coupling map with heavy-hexagon connectivity, the distance between any qubit pair can be reasonably approximated to $O(\sqrt{n})$. In the worst case scenario, one would need to search $O(\sqrt{n})$ mappings for every CNOT gate in the circuit. This can be relaxed to $O(\sqrt{n}/k)$, as each new mapping search will insert $k$ SWAP gates on average. It follows that the whole compilation process should then take $O(g \frac{\sqrt{n}}{k} w^{k} n^{2}) = O(\frac{w^{k}}{k} g n^{2.5})$, where $g$ is the number of CNOT gates in the circuit.
\end{proof}

The $w^k$ factor is due the fact that the pass does not just pick the best scoring SWAP at each iteration, but instead searches over the $k$ best SWAPs, as the best scoring one may not necessarily lead to the best final solution. We could further assume that $w=k$, giving a time complexity of $O(k^{k-1} g n^{2.5})$.

\subsection{Implementation}
A Python implementation of the proposed noise-adaptive compiler is available on GitHub.\footnote{\url{https://github.com/qis-unipr/noise-adaptive-compiler}} It has been designed as a Qiskit pass \cite{IBMQ}, thus it can be used with any quantum device supported by Qiskit.

\section{Application-Motivated Benchmarks}
\label{sec:benchmarks}
With reference to the recent paper by Mills et al.~\cite{Mills2020}, we consider three circuit classes, namely \textit{deep}, \textit{square} and \textit{shallow}.

Deep circuits are constructed from several layers of Pauli gadgets~\cite{Cowtan2020}, which are quantum circuits that implement an operation corresponding to exponentiating a tensor product of Pauli matrices. For example, in quantum chemistry, deep circuits are used to build UCC trial states used in the variational quantum eigensolver (VQE) \cite{Panagiotis2018}.

Square circuits are random circuits built from two-qubit gates. They provide a benchmark at all layers of the quantum computing stack. Indeed, they have been suggested as a means to demonstrate quantum computational supremacy \cite{Boixo2018}.
Square circuits avoid favoring any device in particular, because they allow two-qubit gates to act between any pair of qubits in the uncompiled circuit.

The class of shallow circuits is a subclass of Instantaneous Quantum Polytime (IQP) circuits~\cite{Shepherd2009}. 
IQP circuits consist of gates diagonal in the Pauli-Z basis, sandwiched between two layers of Hadamard gates acting on all qubits. Shallow circuits are characterized by limited connectivity between the qubits and by a depth that increases slowly with width. Thus, shallow circuits are useful for understanding the performance of a device being utilized for applications whose circuit depth grows less quickly than their qubit requirement.

In Section~\ref{sec:eval}, the compiled circuits are evaluated in terms of a few figures of merit that are described below. Let us denote an $n$ qubit circuit as $C$, the ideal output distribution of $C$ as $p_C$ and the output distribution produced by the compiled implementation of $C$ as $D_C$. 

\begin{itemize}
    \item \textbf{Hellinger fidelity} - The Hellinger fidelity between $D_C$ and $p_C$ is 
    \begin{equation}
    F_C = \left(\sum_{x \in \{0,1\}^n} \sqrt{D_C(x)p_C(x)}\right)^2.
    \end{equation}
    We would like that $F_C = 1$.
    \item \textbf{Heavy Output Generation (HOG)} - An output $z \in \{0,1\}^n$ is heavy for a quantum circuit $C$, if $p_C(z)$ is greater than the median of the set $\{p_C(x) : x \in \{0,1\}^n\}$. The HOG probability of $D_C$, i.e., the probability that samples drawn from from $D_C$ will be heavy outputs in $p_C$, is
    \begin{equation}
        \text{HOG}(D_C,p_C) = \sum_{x \in \{0,1\}^n} D_C(x) \delta_C(x)
    \end{equation}
    where $\delta_c(x) = 1$ if $x$ is heavy for $C$, and $\delta_c(x) = 0$ otherwise. We would like HOG$(D_C,p_C) > 1/2$, as it would help us distinguish between a good implementation of $C$ and an attempt to mimic it by generating random bitstrings. Of course this is true if $p_C$ is sufficiently far from uniform.
    \item \textbf{$l_1$-norm distance} - The $l_1$-norm distance between $D_C$ and $p_C$ is
    \begin{equation}
        l_1(D_C,p_C) = \sum_{x \in \{0,1\}^n} |D_C(x) - p_C(x)|.
    \end{equation}
    We would like that $l_1(D_C,p_C) = 0$.
    \item \textbf{CNOT Count} - In a quantum circuit, the number of CNOT gates is denoted as CNOT count.
    \item \textbf{CNOT Depth} - In a quantum circuit, the number of layers containing CNOT gates is denoted as CNOT depth.
\end{itemize}

In Section~\ref{sec:eval}, as suggested in~\cite{Mills2020}, we approximate $D_C$ using samples obtained by running the compiled implementation of $C$ several times. That is, given samples $\mathcal{S} = \{x_i,..,x_m\}$ from $D_C$, let $\mathcal{S}_x$ be the number of times $x$ appears in $\mathcal{S}$ and define $\widetilde{D}_C(x) = \mathcal{S}_x/m$. 

\section{Evaluation}
\label{sec:eval}

\begin{figure*}[!ht]
\centering
\begin{tabular}{ cc }
    \begin{minipage}{0.45\linewidth}
    	\centering
    	\includegraphics[width=\linewidth]{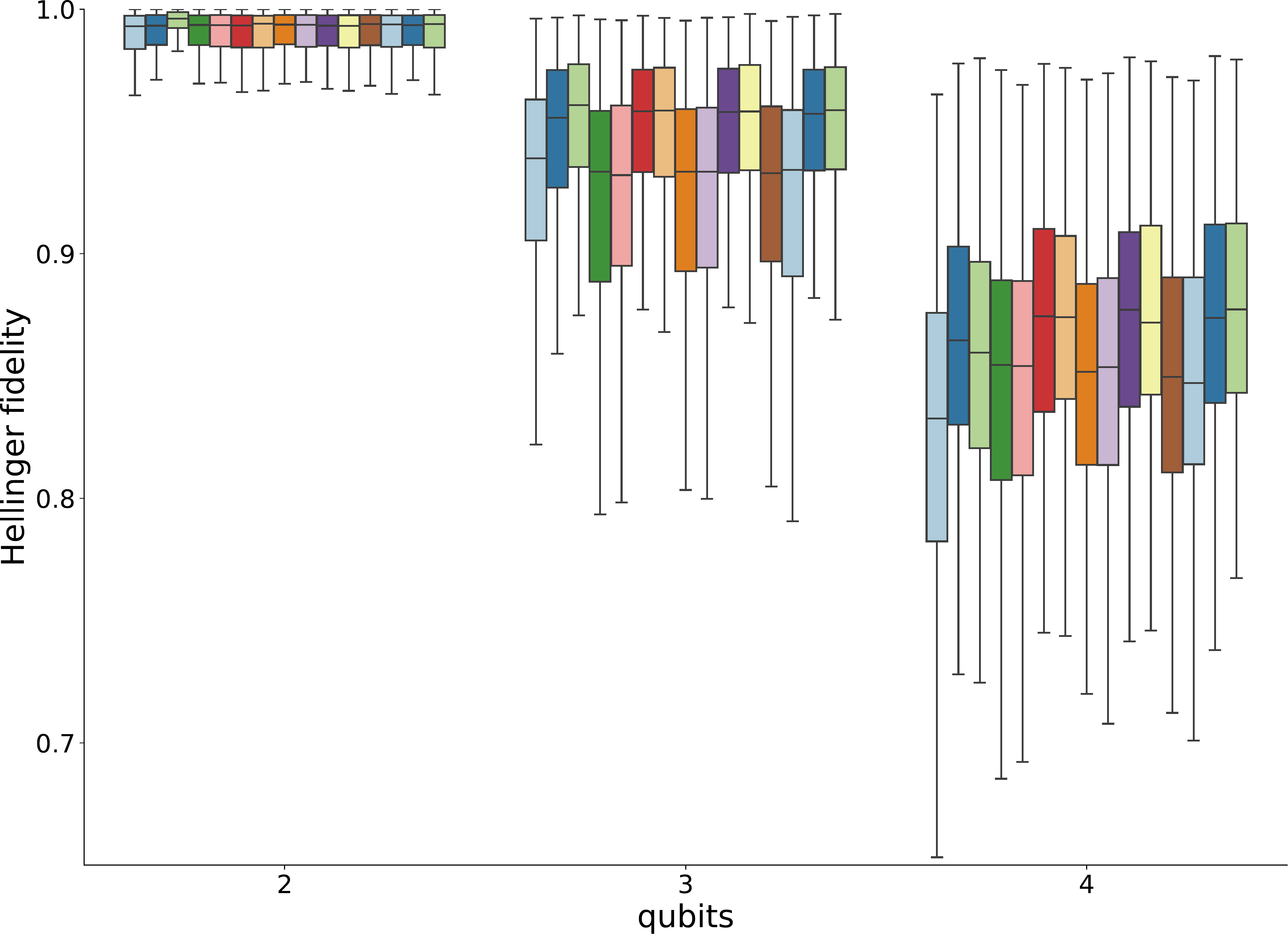}
    	\subcaption{}
        \label{fig:deep_fidelity}
    \end{minipage}
    &
    \begin{minipage}{0.45\linewidth}
    	\centering
    	\includegraphics[width=\linewidth]{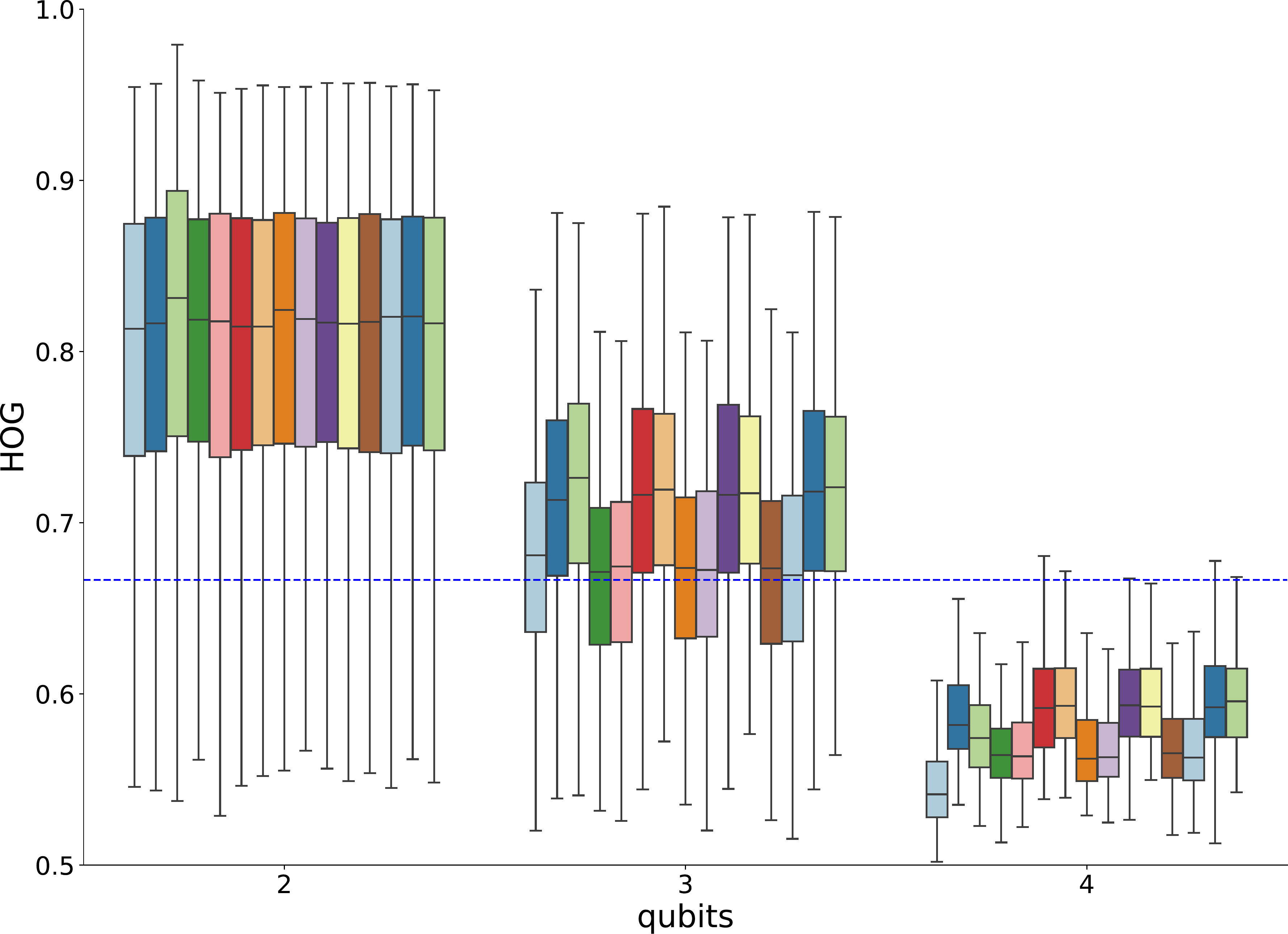}
    	\subcaption{}
        \label{fig:deep_hog}
    \end{minipage}\\
    \begin{minipage}{0.45\linewidth}
		\centering
		\includegraphics[width=\linewidth]{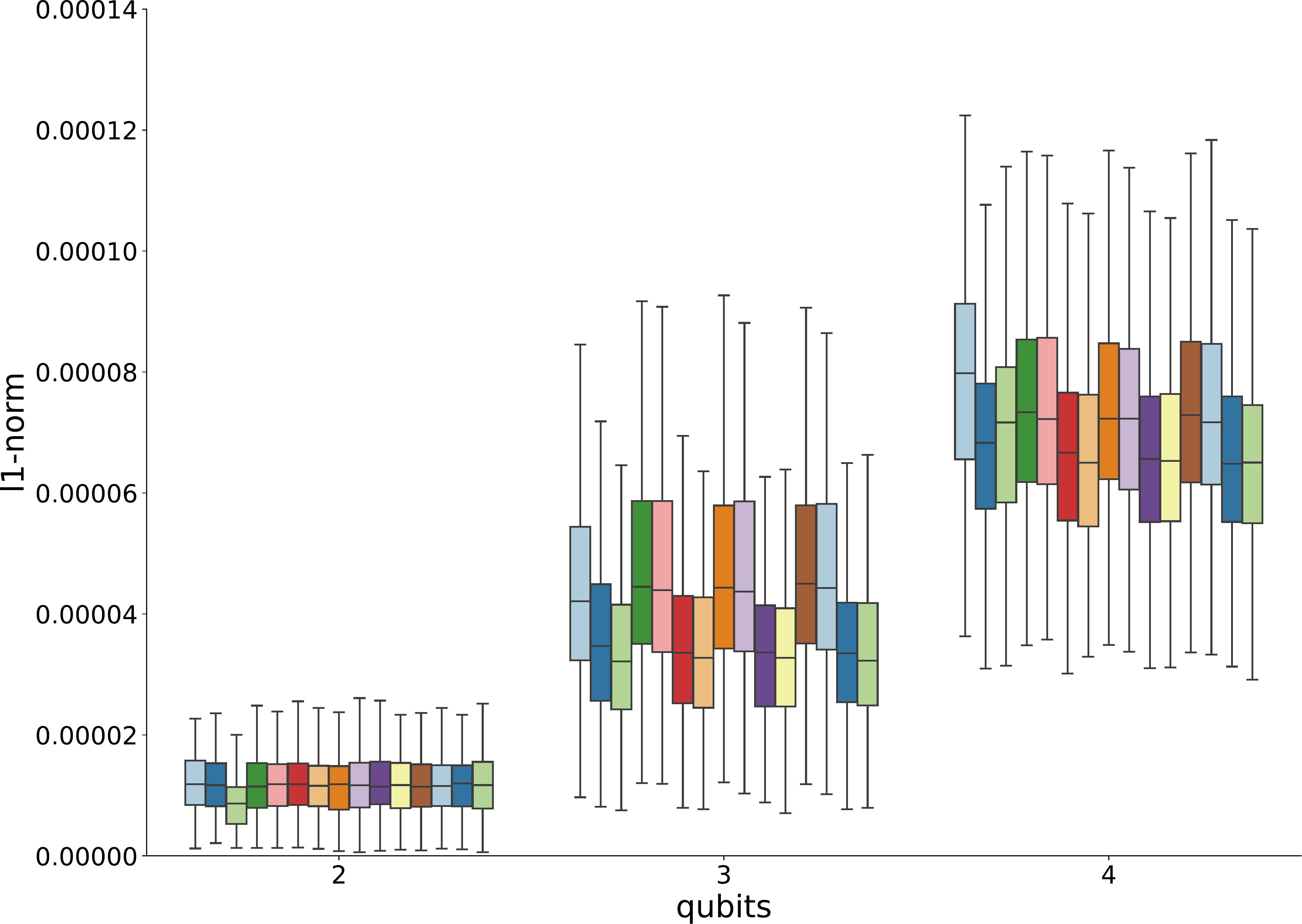}
		\subcaption{}
		\label{fig:deep_l1}
	\end{minipage}
	&
	\begin{minipage}{0.45\linewidth}
    	\centering
    	\includegraphics[width=\linewidth]{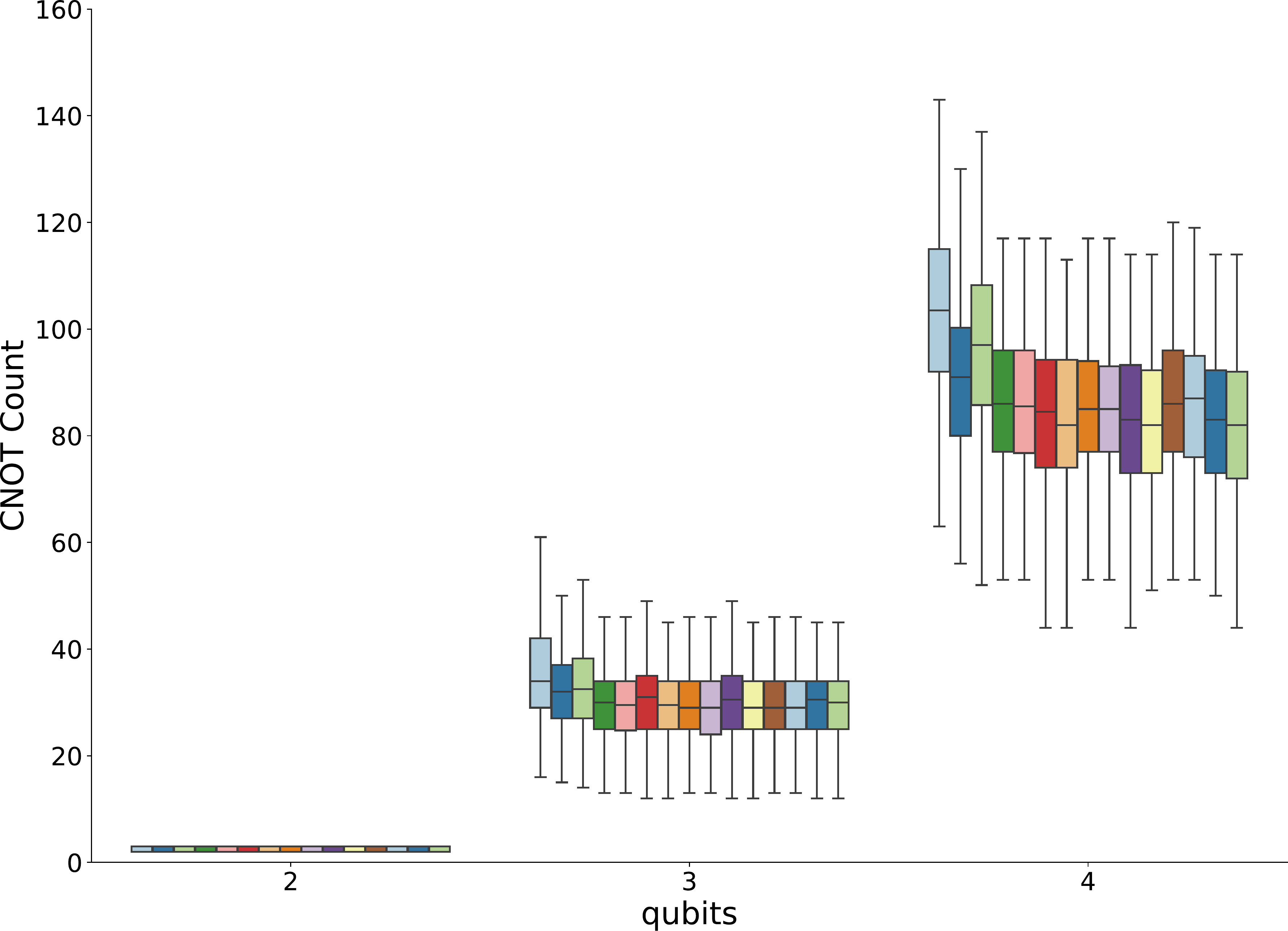}
    	\subcaption{}
        \label{fig:deep_cx_count}
    \end{minipage}\\
    \begin{minipage}{0.45\linewidth}
    	\centering
    	\includegraphics[width=\linewidth]{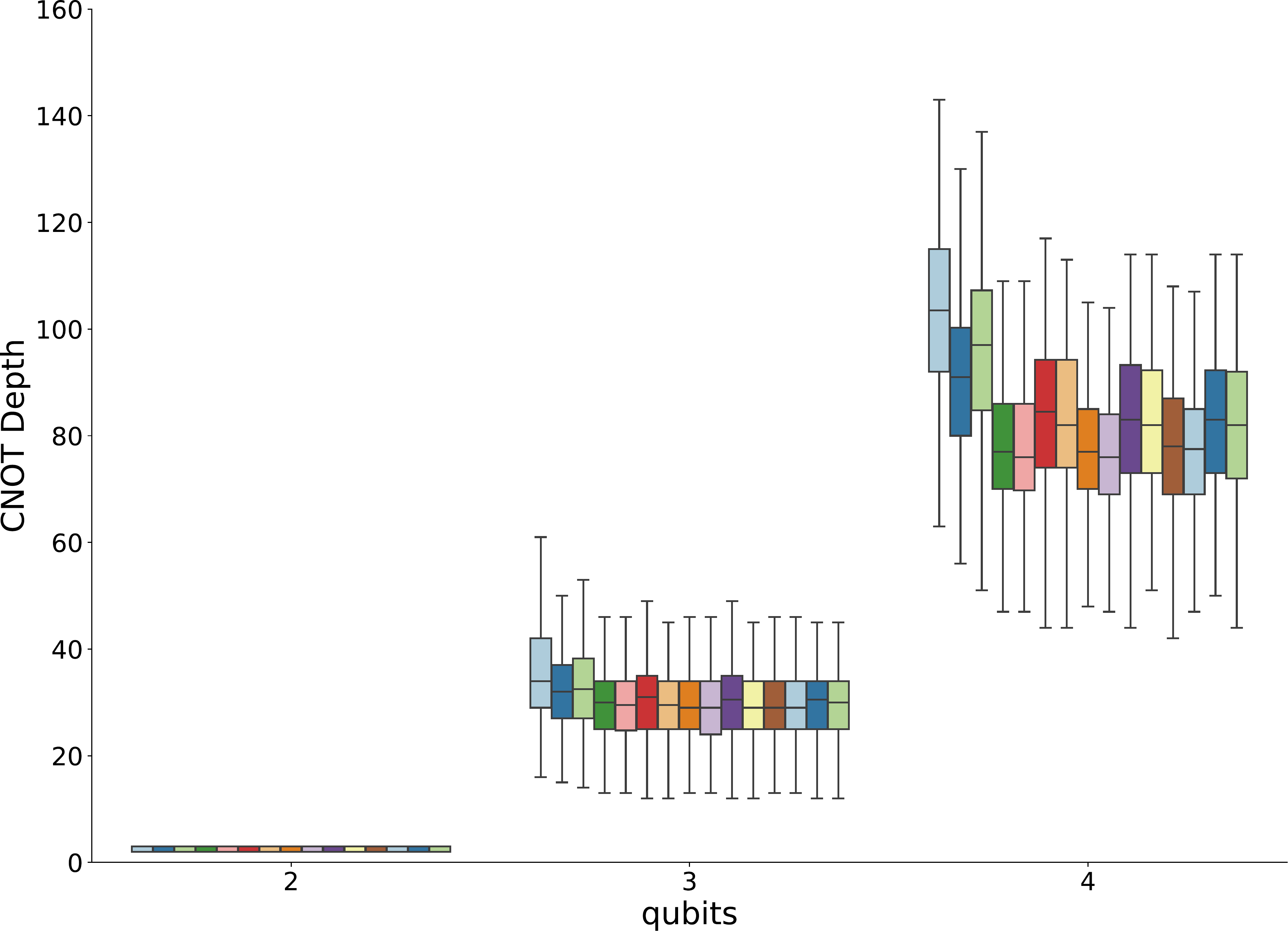}
    	\subcaption{}
        \label{fig:deep_cx_depth}
    \end{minipage}
    &
    \begin{minipage}{7cm}
		\centering
		\includegraphics[width=3cm]{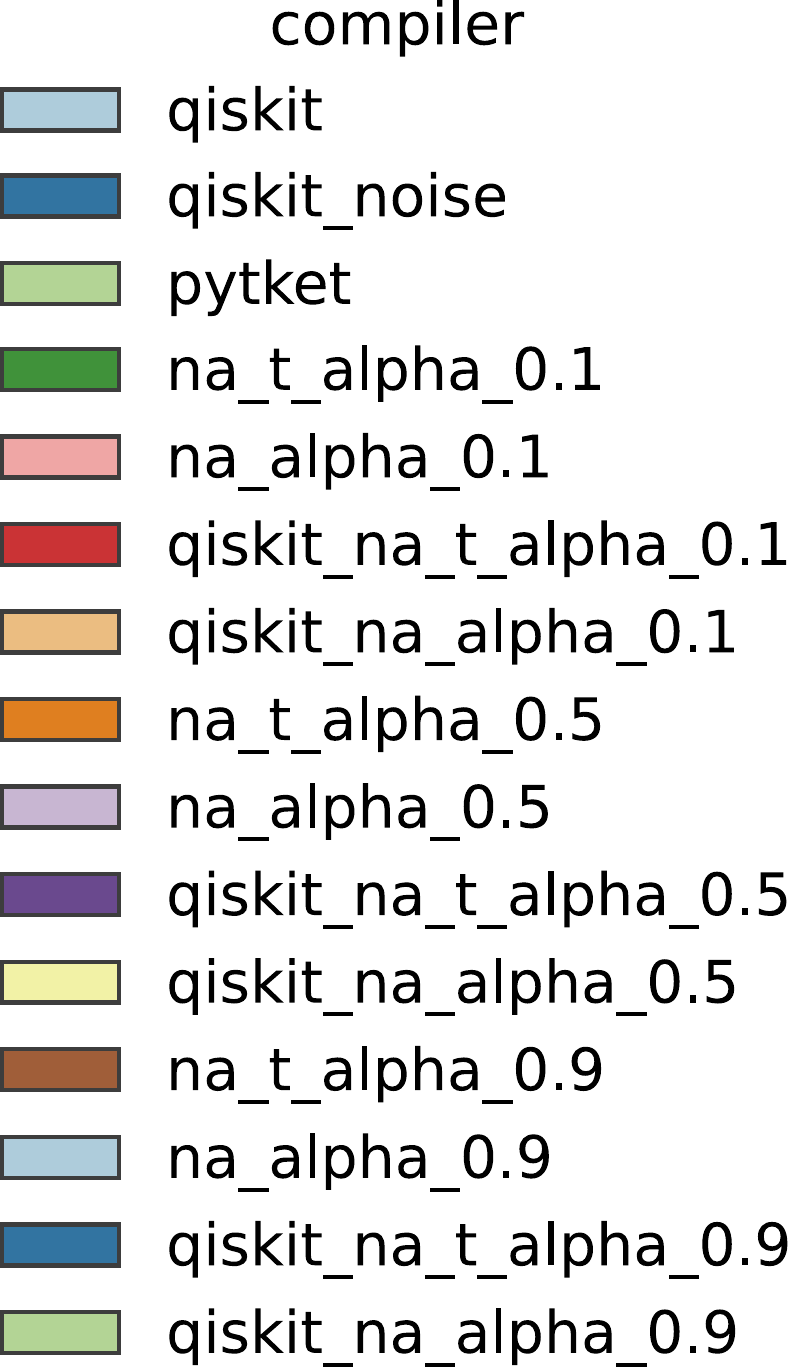}
	\end{minipage}
	\end{tabular}
    \caption{Comparison of compilation strategies with deep circuits using (\ref{fig:deep_fidelity})~Hellinger fidelity, (\ref{fig:deep_hog})~HOG, (\ref{fig:deep_l1})~$l_1$-norm, (\ref{fig:deep_cx_count})~CNOT count and (\ref{fig:deep_cx_depth})~CNOT depth as metrics.  Qiskit’s statevector simulator has been used to sample noisy circuits from the noise model obtained with calibration data of the IBM \textit{ibmq\_casablanca} 7-qubit device.}
    \label{fig:deep}
\end{figure*}

\begin{figure*}[!ht]
\centering
\begin{tabular}{ cc }
    \begin{minipage}{0.45\linewidth}
    	\centering
    	\includegraphics[width=\linewidth]{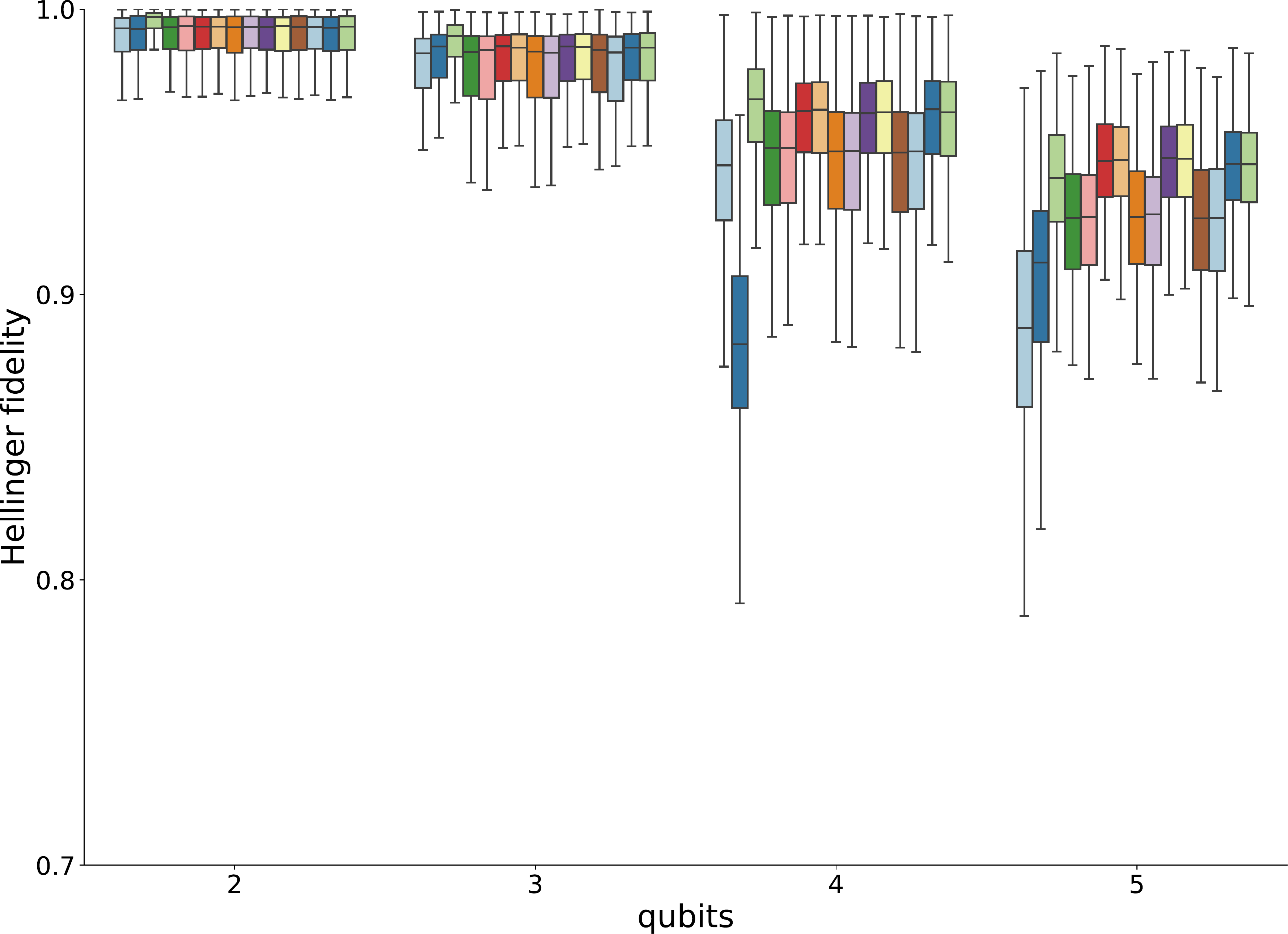}
    	\subcaption{}
        \label{fig:square_fidelity}
    \end{minipage}
    &
    \begin{minipage}{0.45\linewidth}
    	\centering
    	\includegraphics[width=\linewidth]{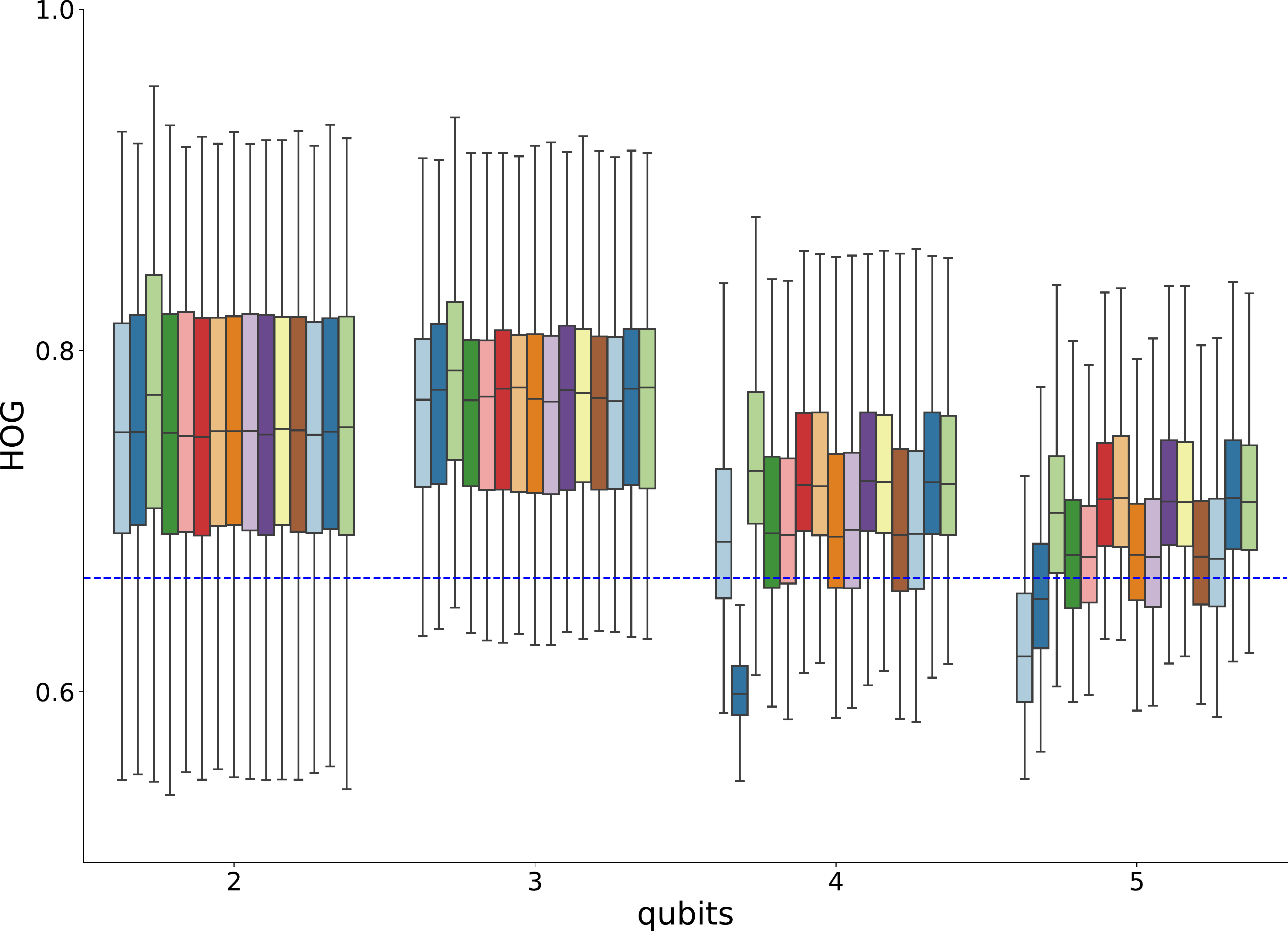}
    	\subcaption{}
        \label{fig:square_hog}
    \end{minipage}\\
    \begin{minipage}{0.45\linewidth}
		\centering
		\includegraphics[width=\linewidth]{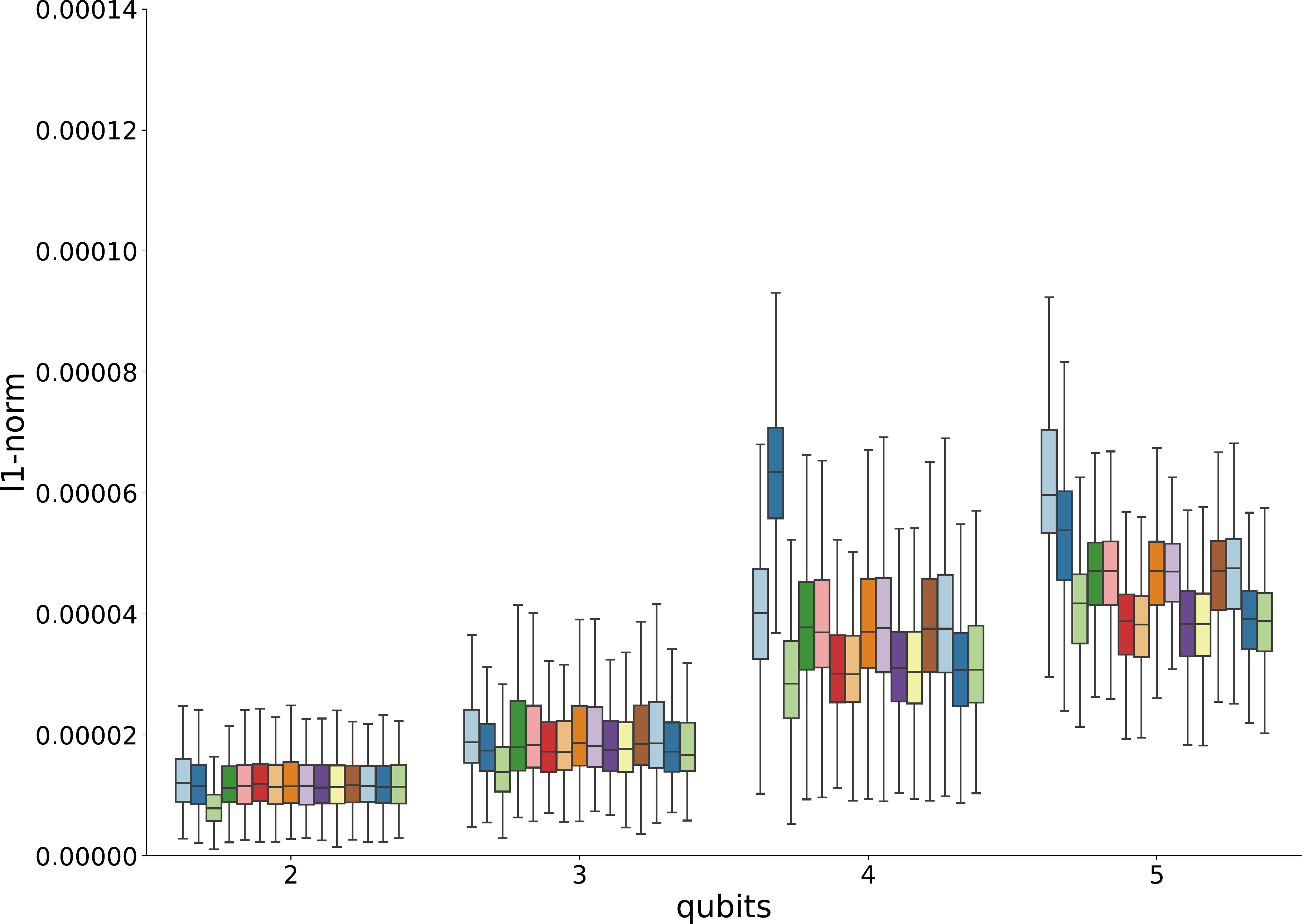}
		\subcaption{}
		\label{fig:square_l1}
	\end{minipage}
	&
	\begin{minipage}{0.45\linewidth}
    	\centering
    	\includegraphics[width=\linewidth]{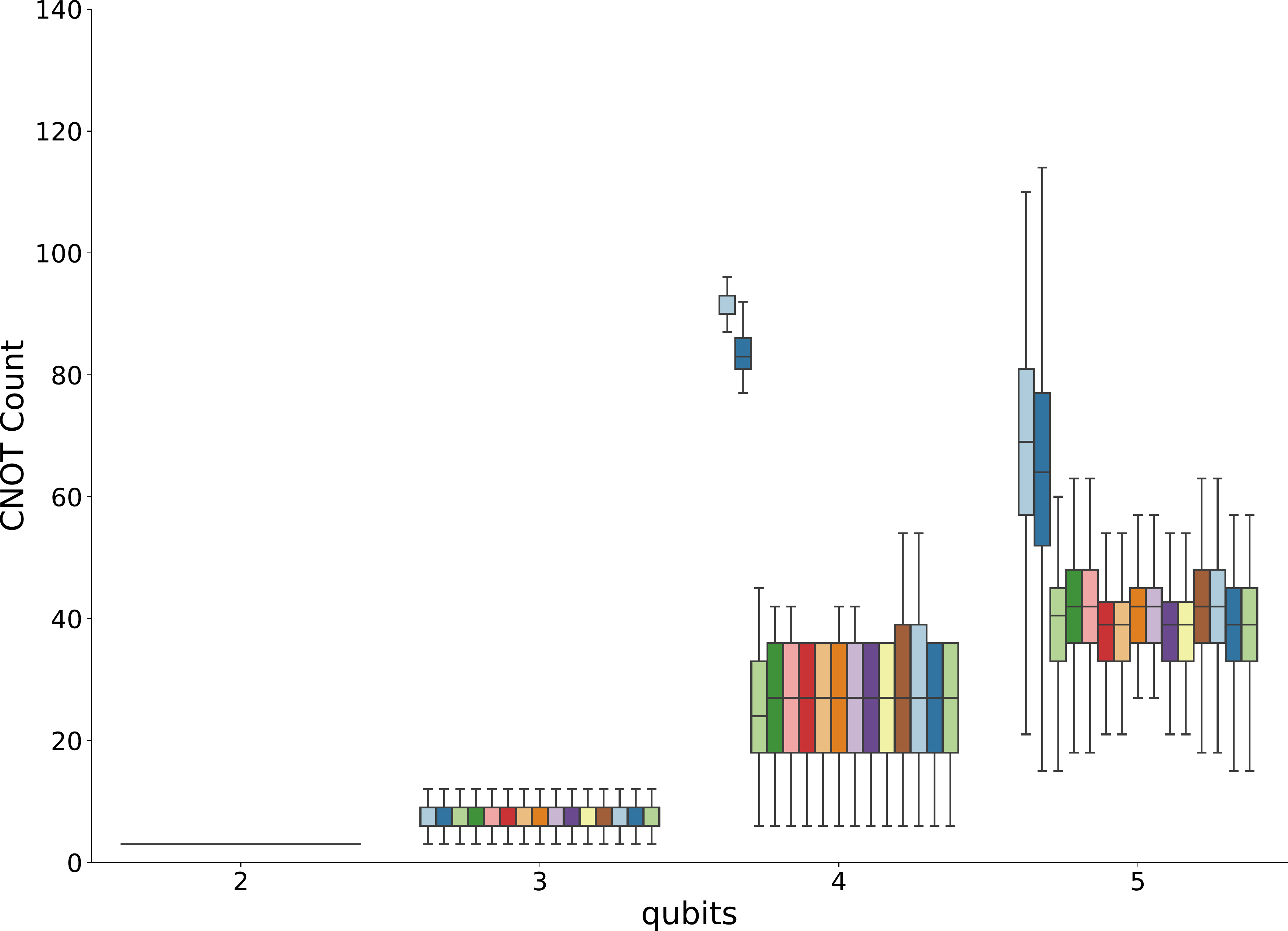}
    	\subcaption{}
        \label{fig:square_cx_count}
    \end{minipage}\\
    \begin{minipage}{0.45\linewidth}
    	\centering
    	\includegraphics[width=\linewidth]{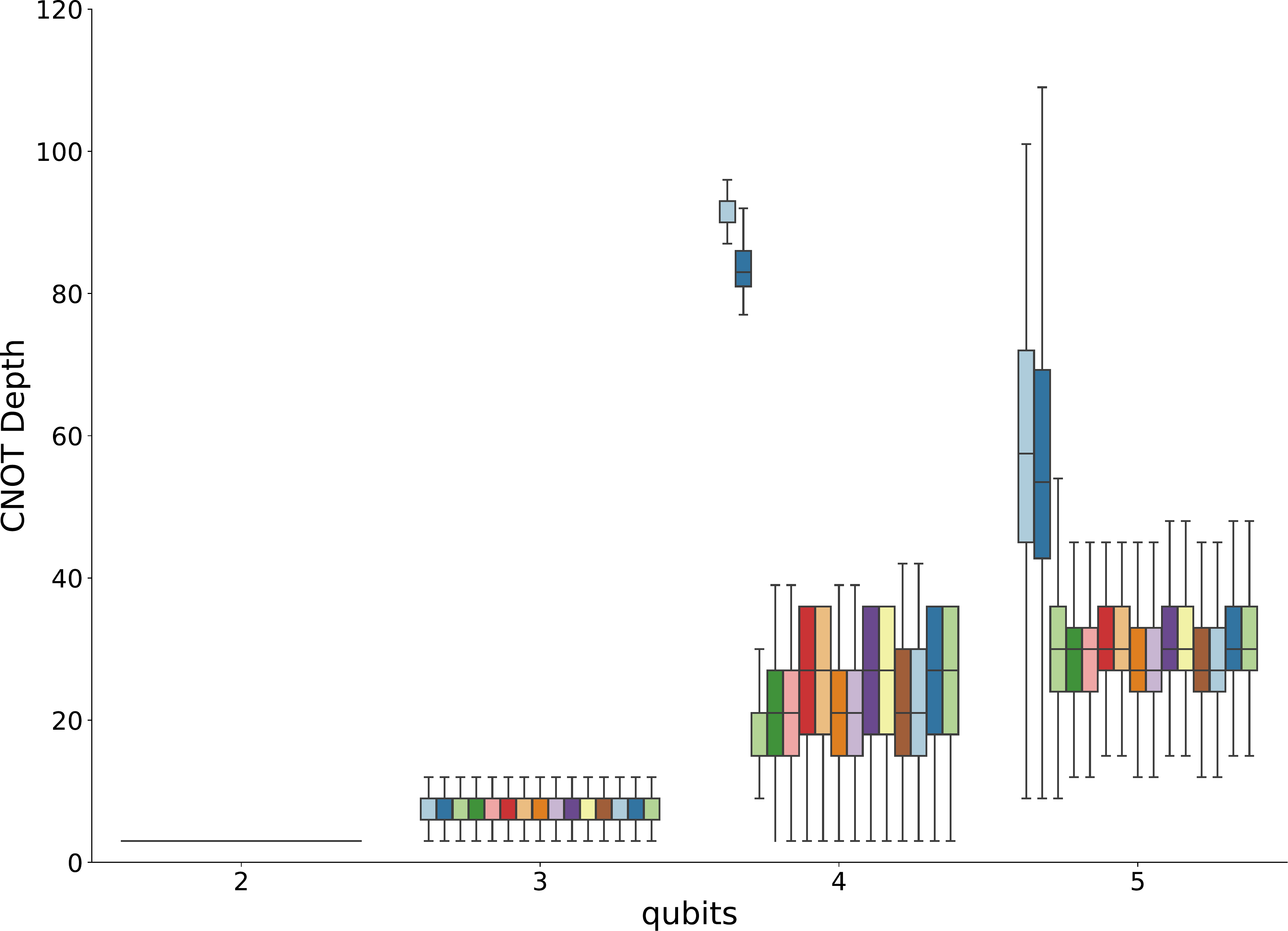}
    	\subcaption{}
        \label{fig:square_cx_depth}
    \end{minipage}
    &
    \begin{minipage}{7cm}
		\centering
		\includegraphics[width=3cm]{legend.pdf}
	\end{minipage}
	\end{tabular}
    \caption{Comparison of compilation strategies with square circuits using (\ref{fig:square_fidelity})~Hellinger fidelity, (\ref{fig:square_hog})~HOG, (\ref{fig:square_l1})~$l_1$-norm, (\ref{fig:square_cx_count})~CNOT count and (\ref{fig:square_cx_depth})~CNOT depth as metrics.  Qiskit’s statevector simulator has been used to sample noisy circuits from the noise model obtained with calibration data of the IBM \textit{ibmq\_casablanca} 7-qubit device.}
    \label{fig:square}
\end{figure*}

\begin{figure*}[!ht]
\centering
\begin{tabular}{ cc }
    \begin{minipage}{0.45\linewidth}
    	\centering
    	\includegraphics[width=\linewidth]{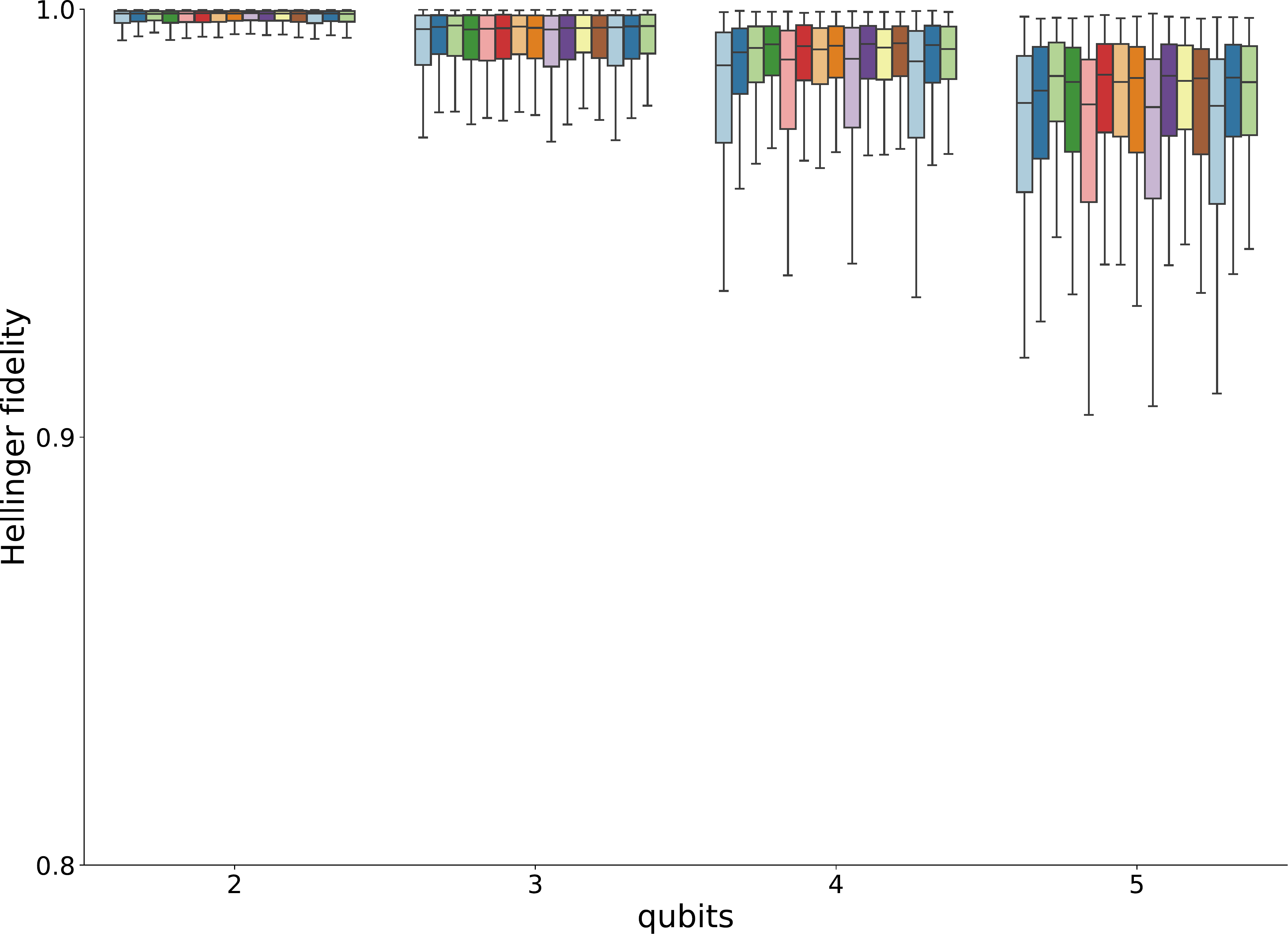}
    	\subcaption{}
        \label{fig:shallow_fidelity}
    \end{minipage}
    &
    \begin{minipage}{0.45\linewidth}
    	\centering
    	\includegraphics[width=\linewidth]{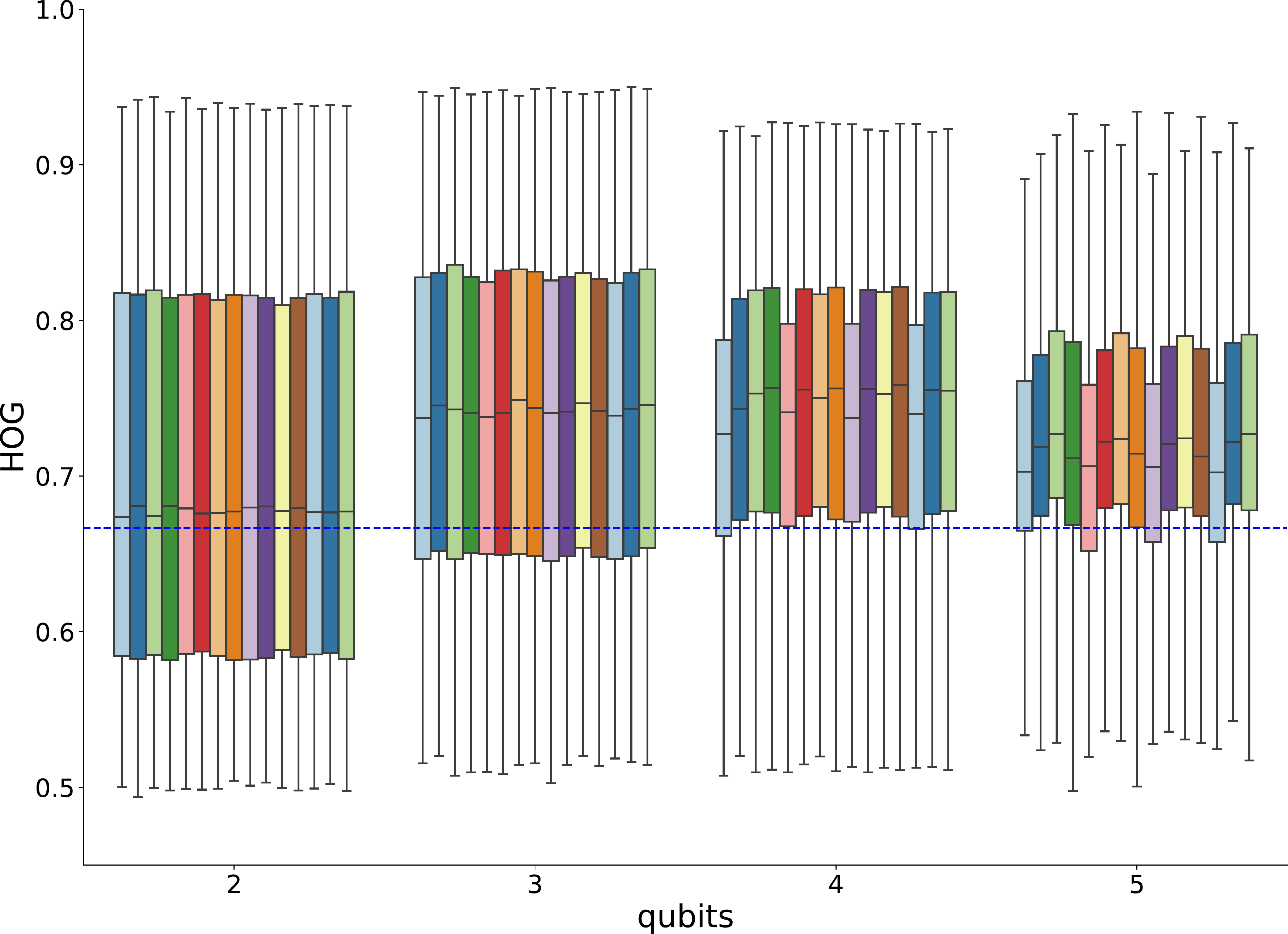}
    	\subcaption{}
        \label{fig:shallow_hog}
    \end{minipage}\\
    \begin{minipage}{0.45\linewidth}
		\centering
		\includegraphics[width=\linewidth]{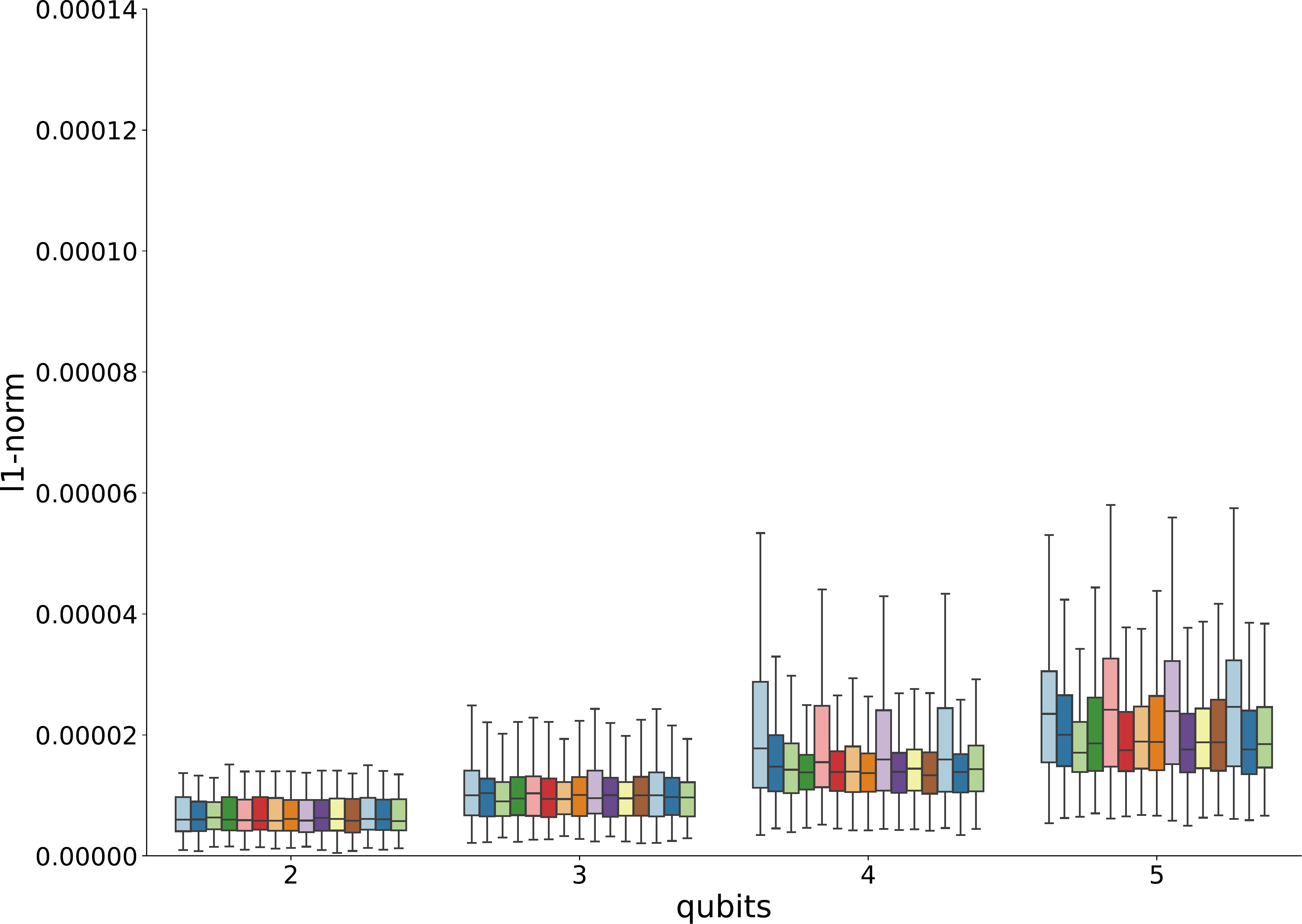}
		\subcaption{}
		\label{fig:shallow_l1}
	\end{minipage}
	&
	\begin{minipage}{0.45\linewidth}
    	\centering
    	\includegraphics[width=\linewidth]{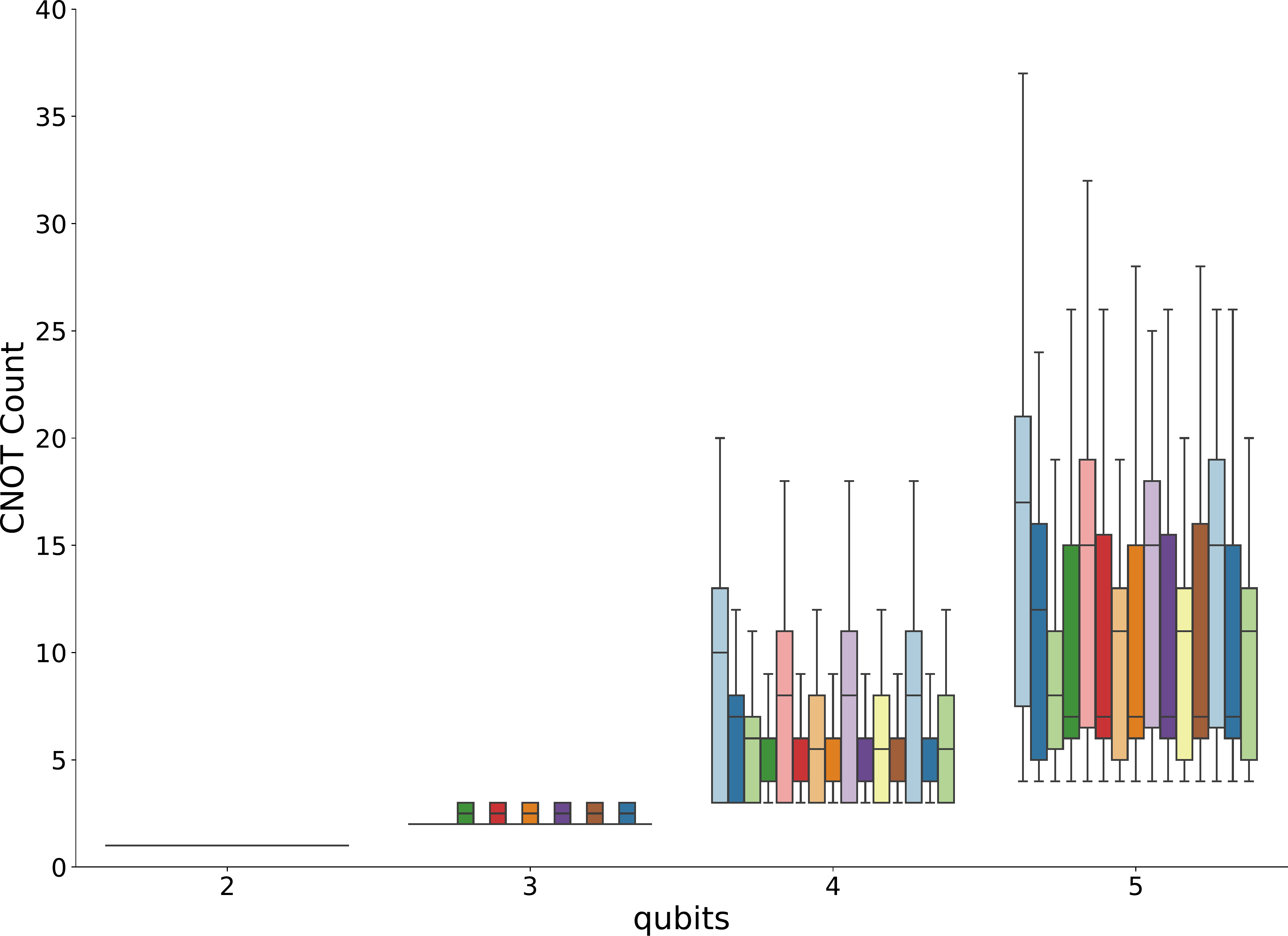}
    	\subcaption{}
        \label{fig:shallow_cx_count}
    \end{minipage}\\
    \begin{minipage}{0.45\linewidth}
    	\centering
    	\includegraphics[width=\linewidth]{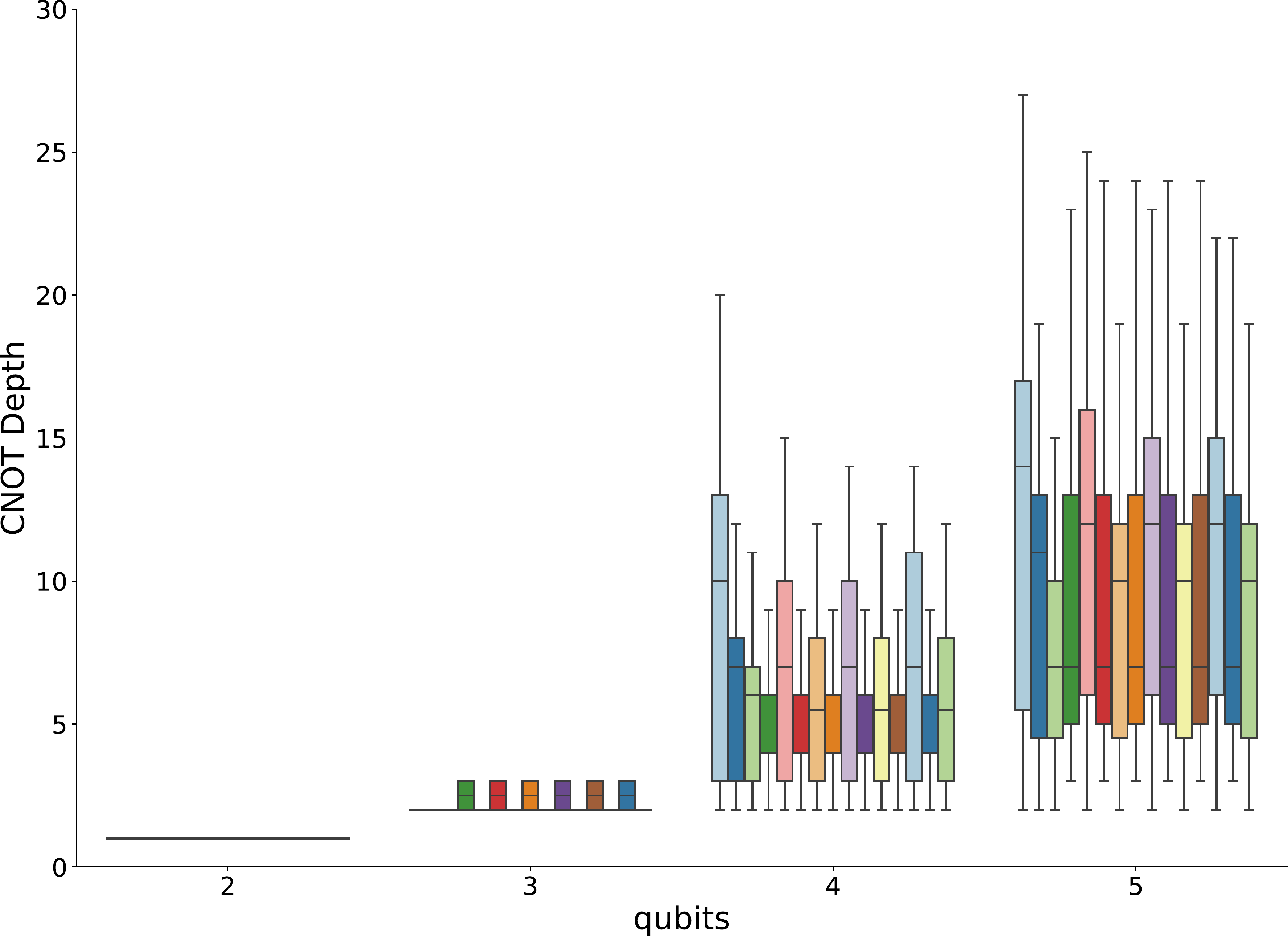}
    	\subcaption{}
        \label{fig:shallow_cx_depth}
    \end{minipage}
    &
    \begin{minipage}{7cm}
		\centering
		\includegraphics[width=3cm]{legend.pdf}
	\end{minipage}
	\end{tabular}
    \caption{Comparison of compilation strategies with shallow circuits using (\ref{fig:shallow_fidelity})~Hellinger fidelity, (\ref{fig:shallow_hog})~HOG, (\ref{fig:shallow_l1})~$l_1$-norm, (\ref{fig:shallow_cx_count})~CNOT count and (\ref{fig:shallow_cx_depth})~CNOT depth as metrics.  Qiskit’s statevector simulator has been used to sample noisy circuits from the noise model obtained with calibration data of the IBM \textit{ibmq\_casablanca} 7-qubit device.}
    \label{fig:shallow}
\end{figure*}

We have evaluated the proposed noise-adaptive compilation strategy, compared with some of the most advanced state-of-art approaches, over the circuit classes presented in Section~\ref{sec:benchmarks}, with 200 circuits for each class and different number of qubits, as in~\cite{Mills2020}. For each circuit the ideal output distribution $p_C$ has been obtained using Qiskit's statevector simulator. The same simulator has been used to sample noisy circuits from the noise model obtained with calibration data of the IBM \textit{ibmq\_casablanca} 7-qubit device.

We have tested the following compilers:
\begin{itemize}
    \item \textit{qiskit} is the Qiskit standard compiler with optimization level 3;
    \item \textit{qiskit\_noise} is the Qiskit compiler with \textsf{NoiseAdaptiveLayout} and optimization level 3;
    \item \textit{pytket} is the t$|$ket$\rangle$ compiler with \textsf{NoiseAwarePlacement} and maximum optimization level;
    \item \textit{na} are the proposed noise-adaptive placement and routing passes integrated with Qiskit and optimization level 3;
    \item \textit{qiskit\_na} is the proposed routing pass combined with Qiskit's \textsf{NoiseAdaptiveLayout} and optimization level 3;
    \item \textit{t} denotes the application of CNOT cascade transformations, as described in~\cite{Ferrari2021}.
\end{itemize}
For each of the above we have used Qiskit v0.23.6 and t$|$ket$\rangle$ v0.7.2. We remark that the \textit{pytket\_na} compiler was not tested, as the integration of \textit{pytket} and \textit{na} is possible only with regards to swap passes, not for the placement ones.

The results are reported in Fig.\ref{fig:deep}, Fig.\ref{fig:square} and Fig.\ref{fig:shallow}.
For each compiler configuration, we have used box plots to depict the distributions of the resulting values for the figures of merit described in Section~\ref{sec:benchmarks}. A box plot is constructed of two parts, a box and a set of whiskers. The box is drawn from the lower quartile to the upper quartile with a horizontal line drawn in the middle to denote the median. For the whiskers, the lowest point is the minimum of the data set and the highest point is the maximum of the data set.

The reader may observe that, for deep and square circuits, the best results in Hellinger fidelity, HOG, and $l_1$-norm are those achieved with \textit{qiskit\_na}. Interestingly, with this compiler, the CNOT depth is worst than with the other compilers. Indeed, improving the quality of a quantum computation with a noise-adaptive compilation strategy does not forcedly imply reducing the depth of the circuit. Regarding shallow circuits, \textit{qiskit\_na} and \textit{pytket} have the same performance.

Regarding the impact of $\alpha$, one may observe that, with deep circuits, $\alpha=0.9$ produces better Hellinger fidelity. Instead, with square and shallow circuits, $\alpha=0.5$ seems a better choice. The difference is nevertheless minimal. This could be due to the small number of qubits in the circuits and in the considered device.

\section{Conclusions}
\label{sec:conclusion}

In this work, we presented a novel noise-adaptive quantum compilation strategy that is computationally efficient.
The contributed strategy assumes heavy-hexagon topologies for quantum devices, which is particularly crucial for the placement pass. Moreover, the assumption simplifies the derivation of the computational complexity upper bounds. Nevertheless, the proposed routing pass is general enough to be effective independently of the coupling map of the target quantum device.

The presented results seem to indicate that our compilation strategy is particularly effective for circuits characterized by great depth and/or randomness. On the other hand, we are aware that the performed evaluation is not exhaustive and further work is necessary to fully characterize the proposed approach, for example in designing a reasonable method for tuning the $\alpha$ parameter in Eq.~\ref{eq:h}. In particular, we plan to extend the evaluation to circuits and devices with more qubits, and to distributed quantum computing architectures as well \cite{VanMeter2016,Cuomo2020,Ferrari2021bis}.

As a final remark, it is worth noting that it is possible to further mitigate the effect of noise by using an \textit{ensemble of diverse mappings} (EDM) approach, as suggested by Tannu and Qureshi \cite{Tannu2019}. In the near future, we shall integrate this method into our quantum compiling library.

\section*{Acknowledgements}

This research benefited from the HPC (High Performance Computing) facility of the University of Parma, Italy.

\bibliographystyle{unsrt}
\bibliography{main.bib}

\appendix

\section{Benchmark Circuits}

As stated in Section~\ref{sec:benchmarks}, the compilation strategies have been evaluated on the circuits proposed by Mills at al.~\cite{Mills2020}\footnote{\url{https://doi.org/10.5281/zenodo.3832121}}. The circuits are divided into three classes. Figure~\ref{fig:gates} shows the distributions of the number of gates in the considered circuits, for each class and number of involved qubits. 

\begin{figure}[!ht]
\centering
\begin{tabular}{c}
\begin{minipage}{0.58\linewidth}
    \centering
    \subcaption{Deep circuits.}
    \includegraphics[width=\linewidth]{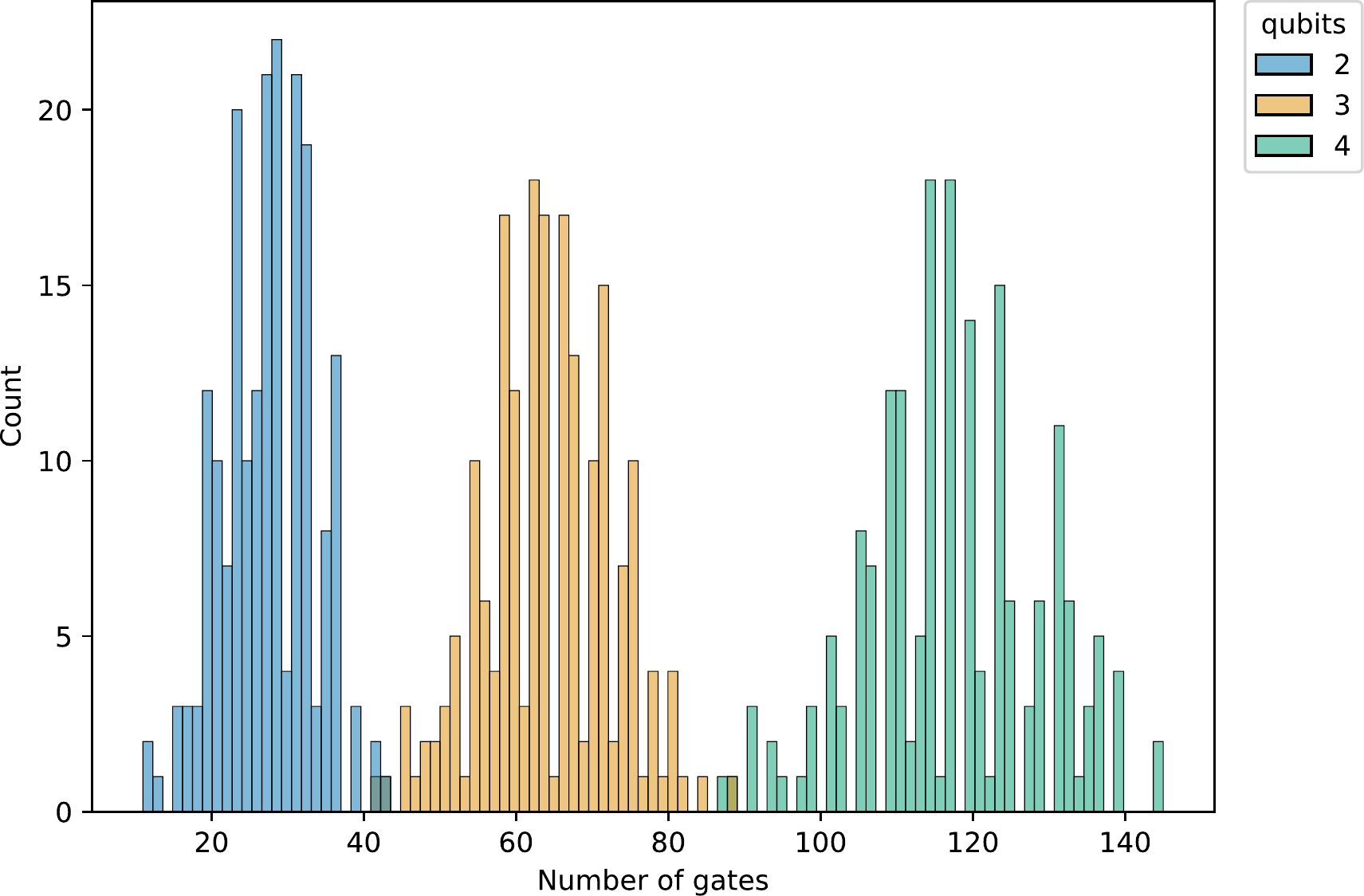}
    \label{fig:deep_gates}
\end{minipage}
\\    
\begin{minipage}{0.58\linewidth}
    \centering
    \subcaption{Square circuits}
    \includegraphics[width=\linewidth]{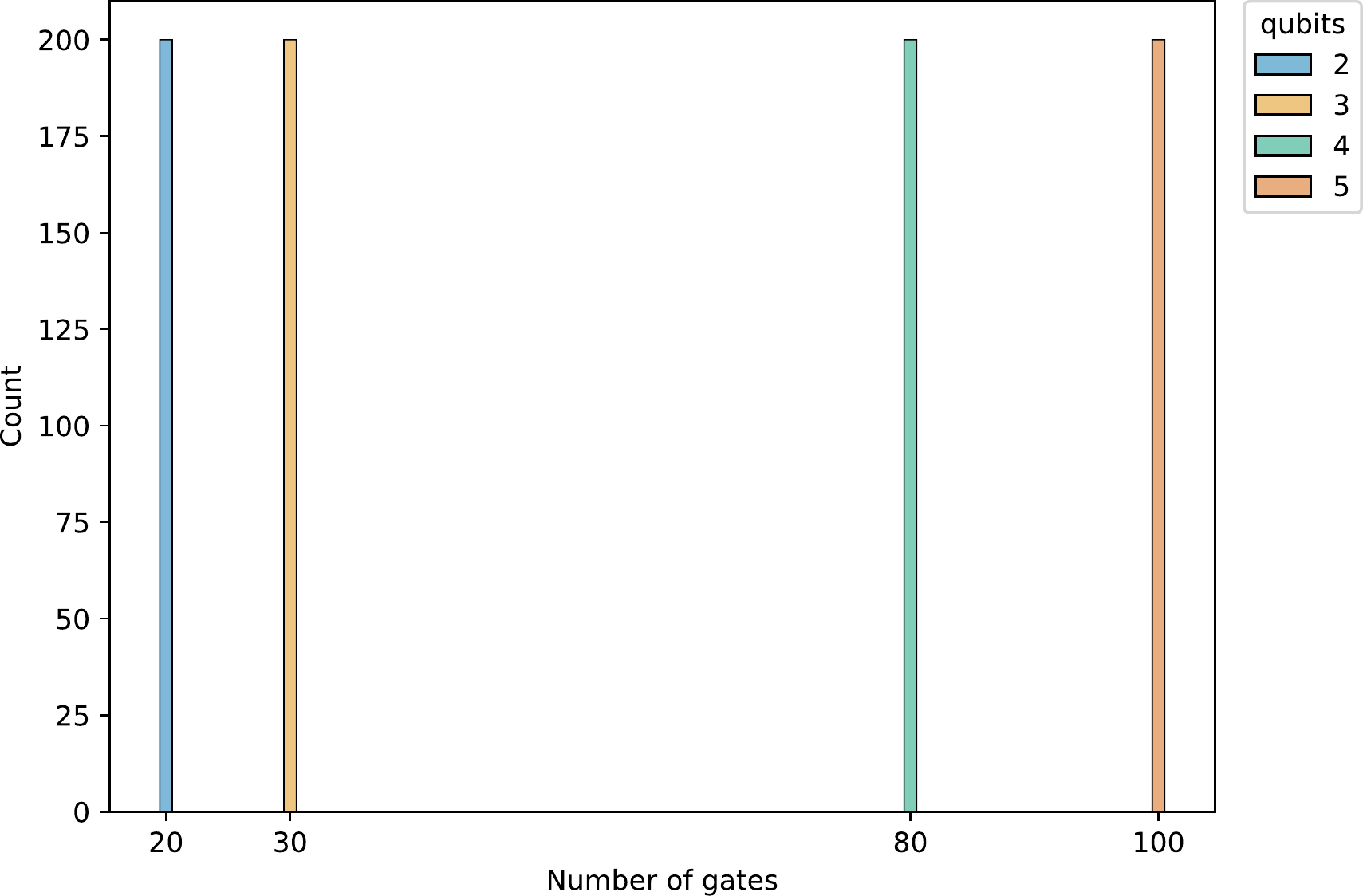}
    \label{fig:square_gates}
\end{minipage}
\\
\begin{minipage}{0.58\linewidth}
    \centering
    \subcaption{Shallow circuits}
    \includegraphics[width=\linewidth]{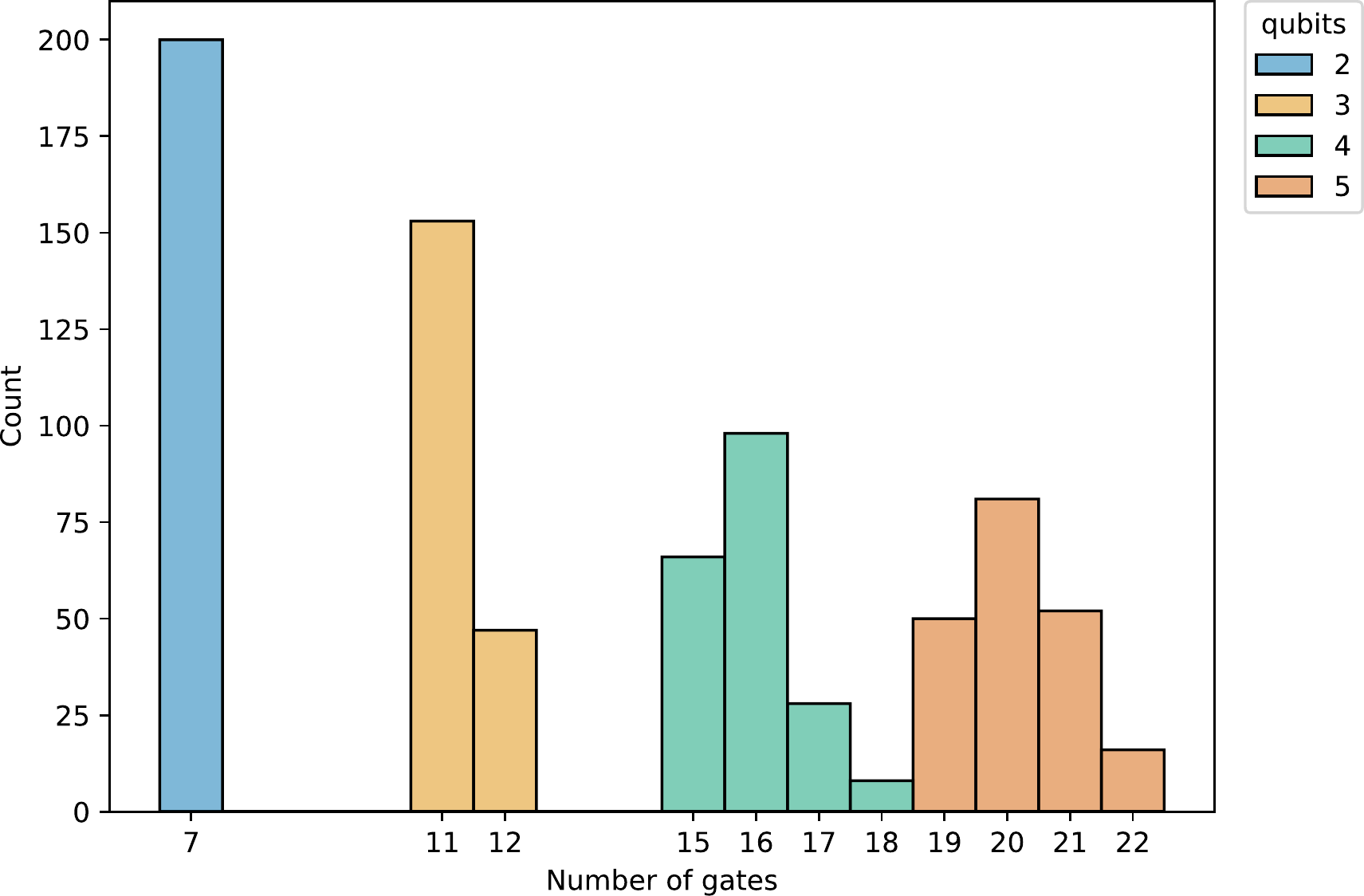}
    \label{fig:shallow_gates}
\end{minipage}
\end{tabular}
\caption{Distributions of the number of gates for each class of circuit, depending on the number of involved qubits. By construction, all the square circuits have the same number of gates, which increases with the number of involved qubits.}
\label{fig:gates}
\end{figure}

\end{document}